\newcommand{\xhdr}[1]{\paragraph*{
  \bf {#1}}}
\newcommand{\omt}[1]{}
\def\eps{{\varepsilon}}
\def\D{{\bf D}}
\def\sneq{{\tiny \mbox{$\neq$}}}
\def\subsetneq{\ \lower.5ex\hbox{$\stackrel{\subset}{\small \sneq}$}\ }
\newcommand{\reals}{\mathbb{R}}
\newcommand{\expect}{\mathbb{E}}
\newcommand{\given}{\,|\,}
\newcommand{\defeq}{\stackrel{\Delta}{=}}
\newcommand{\ialloc}[1]{f_{#1}}
\newtheorem{thm}{Theorem}
\newtheorem{prop}{Proposition}
\newtheorem{lem}{Lemma}
\newtheorem{cor}{Corollary}
\theoremstyle{definition}
\newtheorem{defn}{Definition}
\Crefname{thm}{Theorem}{Theorems}
\Crefname{lem}{Lemma}{Lemmas}
\begin{document}
\title{Delegated Search Approximates Efficient Search}
\author{Jon Kleinberg}
\author{Robert Kleinberg}
\affil{Cornell University, Ithaca, NY 14853, USA.\footnote{%
Email: \texttt{\{kleinber,rdk\}@cs.cornell.edu}.}}
\date{}
%

\maketitle

\begin{abstract}
There are many settings in which a principal performs a task by
{\em delegating} it to an agent, who searches over possible solutions
and proposes one to the principal.
This describes many aspects of the workflow within organizations,
as well as many of the activities undertaken by regulatory bodies,
who often obtain relevant information from the parties being regulated
through a process of delegation.
A fundamental tension underlying delegation is the fact that
the agent's interests will typically differ -- potentially significantly --
from the interests of the principal, and as a result the agent
may propose solutions based on their own incentives that are 
inefficient for the principal.
A basic problem, therefore, is to design mechanisms by which the
principal can constrain the set of proposals they are willing
to accept from the agent, to ensure a certain level of quality
for the principal from the proposed solution.

In this work, we investigate how much the principal loses --
quantitatively, in terms of the objective they are trying to optimize --
when they delegate to an agent.
We develop a methodology for bounding this loss of efficiency,
and show that in a very general model of delegation,
there is a family of mechanisms achieving 
a universal bound on the ratio between 
the quality of the solution obtained through delegation and
the quality the principal could obtain in an idealized benchmark
where they searched for a solution themself.
Moreover, it is possible to achieve such bounds through mechanisms
with a natural threshold structure, which are thus structurally simpler
than the optimal mechanisms typically considered in the literature
on delegation.
At the heart of our framework is an unexpected connection between
delegation and the analysis of {\em prophet inequalities}, which 
we leverage to provide bounds on the behavior of our delegation
mechanisms.

\end{abstract}


%
%

\section{Introduction}
\label{intro}

There are many settings in which a decision-maker is faced with a
difficult problem that they cannot solve on their own,
and so they instead approach it in two steps:
they first {\em delegate} the search for possible solutions
to an {\em agent} who is able to invest more time in the process,
and then they evaluate the solution(s) that the agent proposes.
One concrete example arises in organizations or firms, where
the management may delegate the search for solutions to a
division that reports to them, ultimately
making a decision on the solution that is proposed by the division
\citep{aghion-formal-real-authority,armstrong-delegated,li-delegation-organizations}.
A second example arises in regulation, where a governmental
agency needs to decide whether there is a way to structure
a proposed corporate merger in a way that is compatible with
regulatory guidelines; the companies seeking to merge
study possible ways of structuring the merger,
propose one or more to the regulator, and seek the regulator's approval.
In this way, the search over structures for the merger
has implicitly been delegated from the regulator to the companies,
with the regulator retaining decision-making control over the
the proposed solution \citep{farrell-welfare-antitrust,nocke-dynamic-merger}.
A similar scenario could be described for regulation in other
settings, where a company may be searching over possible
solutions that comply with environmental law, employment law,
or other guidelines.

The interesting tension in all these situations is that the
decision-maker who delegates the task (henceforth referred to
as the ``principal'') has a particular objective
function that they are seeking to optimize; but the agent who actually
performs the task might have interests that
are not directly aligned with the principal's.
For example, in the regulatory context, the regulator (acting
as principal) may reasonably suspect that a merger proposed
by a set of companies (acting as the agent) will be structured in
a way that strongly benefits the companies, even if other feasible
structures would have been better for the market or for society as a whole.
Similarly, a group within an organization tasked with solving
a problem may well preferentially search for a solution
that benefits them in particular.
Given this natural set of incentives,
how should the principal structure the delegation to
the agent so as to ensure that the solution the agent proposes
performs well under the principal's own objective function?

A rich literature has developed in economics around the
formalization and analysis of delegation, focusing
on this tension between the conflicting objectives of
the principal and the agent;
see \citet{holmstrom-thesis,holmstrom-theory-delegation}
for influential early research, and
\cite{alonso-optimal-delegation,armstrong-delegated,
amador-optimal-delegation,ambrus-delegation-money-burning}
for recent work.
A dominant theme in this line of work is that the principal does not
offer monetary compensation to the agent as a way of favoring
certain proposed solutions over others (though see
\citet{krishna-delegation-monetary}); this is consistent with
the motivating applications, in which for example regulators
in many contexts can accept or reject proposals from companies, but cannot
selectively offer varying amounts of compensation to these companies based
on the content of the proposal.
This lack of monetary transfers between the parties imparts
a fundamental structure to the problem, in which the principal
can simply define a mechanism implicitly specifying the subset of all
``eligible'' proposals that they are willing to accept,
and the agent is then incentivized to
search for solutions that are good for them but that also
lie in this eligible set.
A long line of work has gone into determining the structure of
eligible sets that produce optimal mechanisms for the principal,
yielding constructions that are often quite intricate
\citep{alonso-optimal-delegation,armstrong-delegated,melumad-delegation-no-transfer}.

\xhdr{The Present Work}
Given how broadly delegation is used across a range of contexts,
it is interesting to consider how precarious a process it is ---
the principal is ceding control of
their search problem to an agent whose interests might be
completely misaligned with their own, and the only leverage the
principal has is to accept or reject the solution that is eventually proposed.
How much does the principal give up ---
quantitatively,
measured in terms of the objective they are trying to optimize ---
when they delegate to an agent?
Is there some robust, intrinsic reason why things don't turn out
as badly as we might fear?
And how do the answers to these questions depend on what the
principal is actually able to observe about the agent's solution ---
including how much effort the agent put into finding the solution,
and how good it is not just for the principal but also for the agent?

In their most natural formulation,
these are inherently comparative questions, since they seek to
relate the solution quality obtained through delegation to
the solution quality in an alternate, ideal setting where
delegation was not necessary.
As such, they address an issue fundamentally distinct
from the primary focus of the existing literature on delegation,
which as noted above has centered on characterizing mechanisms that
produce optimal
delegation for the principal, without this type of comparative evaluation.

There is a natural benchmark to use for our comparison:
we could measure the quality of the outcome under delegation versus the
quality of the solution that the principal could obtain were
they to perform the search task themself,
investing the same level of effort in the search that the agent does.
Now, there are many settings where it may be too costly or otherwise
infeasible for the principal to actually conduct this search,
but this benchmark nonetheless
provides a conceptual reference point to make clear how much
payoff to the principal is lost through delegation.
In this sense, it plays the role of an optimal point of comparison,
much like the role of the intractable optimum in an approximation algorithm
or the societal optimum in a price-of-anarchy analysis.

In this paper we develop a methodology to bound the performance
of delegated search, relative to the benchmark
in which the principal searches for a solution on their own.
Our methodology builds on a set of links that we identify
between bounds on delegated search and
the analysis of some fundamental models of decision-making
under uncertainty --- in particular, a surprisingly strong connection
between delegated search and bounds on {\em prophet inequalities}.
The connections between these formalisms
turn out be quite natural and useful, but to our knowledge they
have not been previously identified in either of
the literatures on delegated search or on prophet inequalities.
This connection not only provides bounds on the quality of delegated
solutions relative to an ideal benchmark; it also shows that
strong bounds can be obtained using eligible sets that are
structurally very simple --- in a number of cases defined simply by
a carefully chosen threshold rule ---
and hence in contrast with the complex constructions associated
with optimal mechanisms.

\subsection{Overview of Models}

\xhdr{A Distributional Model}
We begin by describing the models in which we perform our analysis.
Our main model, which is essentially the one considered in
\citet{armstrong-delegated}, has the principal and the agent
agree that the agent will consider $n$ candidate solutions
and propose one to the principal;
$n$ thus represents the level of effort that the agent commits to the problem.
The principal will only see the solution that is proposed, not
the other $n-1$ that the agent also considers.

What does it means for the agent to consider a candidate solution?
We assume that the solutions belong to an abstract space
$\Omega$ with a probability measure on it, and the agent's
search for a solution consists of performing $n$ independent
and identically distributed draws from $\Omega$,
resulting in a set of candidate solutions
$\omega_1, \omega_2, \ldots, \omega_n \in \Omega$.\footnote{Later we will
also consider the case in which different draws by the agent
can come from different probability measures on $\Omega$;
for example, this can model the case in which the agent is a group
of $n$ employees in an organization, and the $i^{\rm th}$ solution
is drawn by the $i^{\rm th}$ agent, who may have a different distribution
over solutions from the $j^{\rm th}$ agent.}
Each solution $\omega_i$ drawn by the agent has a quality
for the principal, denoted $x(\omega_i)$, and a possibly different
quality for the agent, denoted $y(\omega_i)$.
The agent selects one of its candidate solutions, say $\omega_i$,
to present to the principal.
(Below, we will discuss the contrast between the model in which
the principal is able to determine both $x(\omega_i)$ and $y(\omega_i)$ ---
the value to both the agent and themself --- for the proposed
solution, and the model in which the principal is only able
to determine $x(\omega_i)$.)

Now, if the principal imposed no constraint on the agent's behavior,
then the agent would simply choose the solution $\omega_i$ that
maximizes $y(\omega_i)$, and the principal would receive whatever corresponding
$x(\omega_i)$ value resulted from this choice.
To improve on this, the principal could specify at the outset that they
will only accept $\omega$ values that satisfy some predicate on
$x(\omega_i)$ and (in the case that they can determine it) $y(\omega_i)$;
we will refer to the set of all $\omega$ satisfying the principal's
predicate as the {\em eligible set} of solutions.
It is thus in the agent's interest to propose a solution belonging
to the eligible set;
we ask whether one can design eligible sets that provide
provable bounds on the expected quality of the solution to the principal,
relative to the scenario in which the principal simply were to
draw $n$ times from $\Omega$ and select the sampled solution $\omega$
with maximum $x(\omega)$.

\xhdr{A Binary Model}
In our first, distributional model, the agent draws a set of
candidate solutions
$\omega_1, \ldots, \omega_n$ that the principal
cannot observe, and then must choose one to present to the principal.
This models a setting in which the agent explores a design space
and cannot fully anticipate which $\omega_i$ it will
encounter until it begins this exploration.
But we can also imagine settings where the principal and agent
both know that the set of potential options comes from a large discrete
set $\omega_1, \omega_2, \ldots, \omega_m$, and the only
question is which of these options is actually feasible to implement.
For example, there may be $m$ standard ways of structuring a merger
in a given industry, and the only question is which are possible
for the companies in question.

We model this version with publicly known binary options as follows.
There is a set of options $\omega_1, \omega_2, \ldots, \omega_m$,
and for each $i$, there is a known probability $p_i > 0$ such that
option $\omega_i$ is {\em feasible} with probability $p_i$, and
{\em infeasible} otherwise.
If option $\omega_i$ turns out to be feasible, then it produces a
known payoff of $x(\omega_i)$ for the principal and a
known payoff $y(\omega_i)$ for the agent;
if it turns out to be infeasible, then it produces a payoff of $0$ for both.
The only way to evaluate the feasibility of option $\omega_i$ is to pay
a cost of $c_i$ to investigate it.

The principal delegates to the agent the task of proposing a feasible
option, which the principal can either accept or reject.
The principal will not be able to see which options the agent
decides to pay to evaluate as part of this task, but again the principal
can specify a predicate defining the eligible subset of $\omega_i$
that they will accept.
Subject to this constraint, the agent then must decide how to
evaluate options in a way that maximizes its own benefit $y(\omega_i)$
from the option $\omega_i$ it proposes, minus the evaluation cost.
Here too we evaluate the principal's payoff relative to the
scenario in which they performed the evaluation of options themself.
We also consider an extension of this model in which
there is a budget of $n < m$ on the number of options that
the agent can evaluate --- a constraint analogous to the bound
on the number of samples the agent can evaluate in our
first distributional model.

\subsection{Overview of Results}

We begin by showing that for an arbitrary instance of the
distributional model, there is a mechanism the principal can
specify to the agent so that the principal's expected payoff from
delegation is within a factor of $2$ of the expected payoff
they'd receive were they able to search for the solution by themself.
(If the principal were to search by themself, they would examine
$n$ candidate solutions and choose the one that was best for them.)
This mechanism only requires knowledge of the principal's
$x(\omega_i)$ values, not the agent's $y(\omega_i)$ values,
and it has a very simple structure:
depending on the distribution of values, it can be written
as a threshold rule with either a weak threshold,
in which the principal only accepts
proposals $\omega$ for which $x(\omega) \geq \theta$ for some $\theta$,
or a strict threshold,
in which the principal only accepts
proposals $\omega$ for which $x(\omega) > \theta$ for some $\theta$.
In the case when $x(\omega)$ and $y(\omega)$ are distributed independently
with no point masses, the factor of
$2$ in this bound can be improved to $e / (e-1)$.


There are several things worth remarking on about this result.
First, the fact that arbitrary instances of the problem have mechanisms
providing provable guarantees of this form suggests a qualitative
argument for the robustness of delegation: no matter how misaligned
the agent's interests are, the principal can ensure an absolute
bound on how much is lost in the quality of the solution.
Second, the mechanism that achieves this bound is very simple and detail-free,
consisting of just a (weak or strict)
threshold on the quality of the solution for the principal.
And third, the mechanism requires knowledge of $n$ (the number of
samples drawn by the agents) but not the values $y(\omega)$.
In this sense, it suggests that it is more important for the
principal to know how much effort the agent has spent on the search
(via $n$) than to know how good the proposed solution is for
the agent (via $y(\omega)$).


\xhdr{A connection to prophet inequalities}
These results on threshold mechanisms and their guarantees
follow from a general result at the heart of our analysis ---
a close connection between bounds for delegated search
and {\em prophet inequalities}.
Prophet inequalities are guarantees for the
following type of decision under uncertainty: we see a sequence
of values $s_1, s_2, \ldots, s_n$ in order, with $s_i$ drawn from a
distribution $S_i$, and when we see the value $s_i$ we must
irrevocably decide whether to stop and accept $s_i$, or continue
(in the hope of finding a better value in the future).
Research on prophet inequalities has established the non-trivial
fact that it is possible to design rules whose expected payoff
comes within a constant fraction of the maximum achievable by
a decision-maker who could see all the
realized values $s_1, s_2, \ldots, s_n$ in advance.

Prophet inequalities tend to be established by designing
carefully constructed {\em threshold rules}, in which
the decision-maker accepts $s_i$ if and only if
$s_i$ (weakly or strictly) exceeds a specified threshold $\theta_i$
that can depend on the position $i$.
The key component of our analysis is to establish a close,
though subtle, technical connection between
delegated search and prophet inequalities: roughly speaking,
the sequence of values
$x(\omega_1), x(\omega_2), \ldots, x(\omega_n)$
sampled by
the agent from the set of possible solutions $\Omega$
plays the role of the process
generating $s_1, s_2, \ldots, s_n$;
and the principal and the agent jointly --- through the
principal's specification of the threshold and the agent's
incentive to obey it --- play the role of the decision-maker
who uses a threshold rule for deciding when to stop.
Again, the notion of ``stopping'' here is a bit oblique,
since the principal never sees the full sequence
$x(\omega_1), x(\omega_2), \ldots, x(\omega_n)$
that the agent generates;
this is the sense in which the stopping rule is
jointly constructed by the behavior of the principal and the agent together.

\xhdr{Stronger bounds for independent values}
Using this connection to threshold rules for prophet inequalities,
we can design a much more powerful policy for the principal in
the case when the values of
$x(\omega)$ and $y(\omega)$ on a draw $\omega$ from $\Omega$
are distributed independently,
and when the principal can see both
$x(\omega_i)$ and $y(\omega_i)$ (rather than only $x(\omega_i)$)
in the solution $\omega_i$ proposed by the agent.

To do this, we begin with a stopping rule from the
prophet inequality literature achieving an expected
payoff that is at least $0.745$ times the optimum when the
distributions of the $s_i$ values are independent and
identically distributed \citep{abolhassani-prophet-ineq,correa-prophet-ineq,hill-kertz,kertz86}.
This stopping rule uses a sequence of thresholds $\theta_i$
that decrease with $i$, making the decision-maker naturally
more prone to stop and accept a value as the end of the sequence nears ---
effectively following the idea that one should only accept
a value early if it's very good.

In the context of delegated search when the principal
can observe both $x(\omega_i)$ and $y(\omega_i)$
for a proposed solution $\omega_i$, a related
concept is useful for designing mechanisms: the principal should
only accept a solution $\omega_i$ with $y(\omega_i)$ very large
if $x(\omega_i)$ is very large as well.
The analogy between requiring strong incentive to
accept a value with large $y(\omega_i)$ in delegated search and requiring
strong incentive to accept a value early in the sequence
in the prophet inequality
context can be made precise, and it reveals that the $y(\omega_i)$
values (over the set of candidate solutions considered by the agent)
can be used as a kind of ``continuous time'' parameter
for deriving a threshold: if we think of the candidate solution $\omega_i$
as arriving at continuous time $y(\omega_i)$, then we can derive a
threshold function $\theta(\cdot)$ in which the principal
only accepts $\omega_i$ if
$x(\omega_i)$ (weakly or strictly)
exceeds $\theta(y(\omega_i))$.\footnote{We observe that this
continuous time defined by the $y(\omega_i)$ runs ``in reverse,''
in the sense that large values of $y(\omega_i)$, like small values of time,
place stricter demands on the $x(\omega_i)$ values that can be accepted.}
In this sense, the principal and agent again jointly construct the
stopping rule, with the agent's payoff providing a type of synthetic
temporal ordering that is useful in formulating a threshold policy.

\xhdr{Bounds for the binary model}
We also use the connection to prophet inequalities to derive
bounds for the binary model, where the agent pays to evaluate
the feasibility of pairs from a known list of options
$\omega_1, \omega_2, \ldots, \omega_m$.
Here too the principal can
designate a predefined eligible set of proposals
so that the mechanism that accepts any eligible proposed $\omega_i$
yields an expected payoff that is within a factor of two of
the benchmark in which the principal performs the search on their own.
However, the eligibility criterion in this case
is subtler: it depends not only on the principal's
assessment of the proposal's quality, $x(\omega_i)$,
but also on the cost $c_i$ and the {\em a priori}
probability of feasibility, $p_i$.

To establish this bound, we draw on both prophet inequality bounds
and on work of \citet{kleinberg-ec2016}
for the {\em box problem} \citep{weitzman-box-problem};
by considering an ordering of the options by the notion of
{\em reservation price} (or, equivalently,
{\em strike price}) defined in those works, we can
establish a provable guarantee that correctly handles not
only the payoff arising from the $x(\omega_i)$ and $y(\omega_i)$ values
but also the cost incurred by the agent
in evaluating the feasibility of options.

Finally, we derive similar bounds in the more general case where
the agent also has a budget of $n < m$ on the number of options
they can evaluate.
The approach using reservation prices does not directly extend
to this case, but we show that by combining the approach
of \citet{kleinberg-ec2016} with
bounds for stochastic optimization due to
\citet{asadpour-stochastic-submod}, we can obtain more general
bounds for a budgeted variant of the box problem that contains
the case we need for our delegated search guarantee.

We note that it would be a natural open question to consider a
variant of the problem combining characteristics of the two
main versions we consider here: as in the distributional model,
the agent performs independent
draws from a space $\Omega$;
but as in the binary model, the agent does not have a fixed bound $n$
on the number of allowed draws, instead incurring a cost to perform
each draw that must be traded off against the eventual payoff
from the sample selected.

\subsection{Further Discussion of Related Work on Delegation}

The theory of delegation in the economics literature is often
viewed as beginning with Bengt Holmstrom's Ph.D. thesis
\citep{holmstrom-thesis,holmstrom-theory-delegation};
this work articulates the basic tension that we see in these
models, between allowing an agent to optimize in a large space
and restricting the agent's freedom of action to prevent them
from pursuing their own objectives too aggressively.
Holmstrom's model considered delegating an optimization problem
over an interval, and a sequence of subsequent papers analyzed the case
in which the agent optimizes over a continuum
\citep{alonso-optimal-delegation,melumad-delegation-no-transfer}.
\citet{armstrong-delegated} propose a model that is very
close to what we consider here, where the optimization takes
place over a discrete set that the agent samples from an
underlying distribution.
By way of comparison between our work and that of
\citet{armstrong-delegated}, we noted the key contrast
earlier in this section:
their paper is largely devoted to identifying cases of the delegated
search problem for which the structure of the optimal mechanism can be
identified, whereas we focus on bounding the inefficiency of
delegated search relative to a benchmark in which the principal
performs the task themself.
It is through our emphasis on these types of bounds that we develop
the connection to the analysis of prophet inequalities.

A distinct line of work in delegation relaxes the constraint that
the principal may only allow or forbid each proposed solution,
and instead allows the principal to add arbitrary amounts of
cost to certain subsets of proposed solutions
\citep{athey-delegation-rigidity,amador-optimal-delegation,ambrus-delegation-money-burning}.
One of the key motivations for such a condition is to model
the strategic role of bureaucracy within an organization:
if management wants to dissuade units within the organization
from proposing certain types of solutions, they can use
bureaucratic measures (requiring more extensive justifications
and processes) that make these solutions selectively more costly
without explicitly forbidding them, and without engaging in
explicit monetary transfers.
\citet{ambrus-delegation-money-burning} propose a model in which
such selective cost increases are in fact part of the optimal
delegation scheme.

Finally, a recent working paper by \cite{taggart} studies
algorithmic delegation in a much more general setting, in which
a principal must choose an action and she delegates this choice
to an agent who is informed of the state of the world. The
principal's and agent's
preferences over actions in every state of the world may differ,
leading to a problem of designating a set of eligible actions among which the agent may choose,
so as to maximize the principal's utility when the agent chooses selfishly among
the eligible actions. This problem is shown
 to be computationally hard in general, but a simple threshold
 policy is shown to achieve a 2-approximation to the optimal
 mechanism under a ``negative bias'' condition. Such a result is
 in the same spirit as our 2-approximation result for threshold
 rules, though the hypothesis under which their result holds, and
 the method of proof, are quite different.

\section{Model and Preliminaries}
\label{sec:prelim}

We begin by making the precise the way in which the
principal and the agent interact, resulting in the
principal's selection of (at most) one element from a set
$\Omega$ of potential solutions.
There are functions $x : \Omega \to \reals$ and $y: \Omega \to \reals$
such that if $\omega \in \Omega$ is selected, then the
principal's utility is $x(\omega)$ and the agent's
utility is $y(\omega)$.
To formalize the possibility
that the principal selects no solution (i.e., perpetuating
the status quo) we identify this
possiblity with a special  ``null outcome'', denoted
$\bot$, and we extend the utility functions from
$\Omega$ to $\Omega_+ \defeq \Omega \cup \{\bot\}$
by setting $x(\bot) = y(\bot) = 0$.

The set $\Omega$ is a probability
space, with probability measure $\mu$, and
the agent has the power to draw independent samples from
$\Omega$ according to $\mu$.
The principal, on the other hand,
can neither draw samples
from $\Omega$ nor directly observe the
outcome of the agent's sampling;
she must rely on her interaction with the agent to
arrive at a selected element of $\Omega$.\footnote{Throughout
this paper we use
feminine pronouns for the principal and masculine
pronouns for the agents.}

Before formalizing our model of interaction,
it is useful to first note some of the
ways in which our basic model can be generalized or specialized.
\begin{itemize}
\item We will initially consider the case of a single
probability measure $\mu$ on $\Omega$, but it is also useful
to consider cases in which there are
multiple probability measures $\mu_1, \mu_2, \ldots, \mu_m$ on $\Omega$,
and the agent has the power to draw independent samples from
any of these distributions.
\item We will generally assume there is a {\em sampling budget} of
$n$ on the number of samples that the agent can draw.
In some of our models, we will also introduce a {\em sampling cost}
$c \geq 0$ for each draw by the agent --- or in the case of
multiple probability measures, a cost $c_i \geq 0$ for sampling from $\mu_i$.
\item We consider both the {\em full-information case} --- in which
the principal knows both the functions
$x : \Omega \to \reals$ and $y: \Omega \to \reals$, and hence
can evaluate the utility of a solution $\omega$ to both herself
and to the agent --- and the {\em limited-information case},
in which the principal only knows her own utility function $x$.
\item The functions $x : \Omega \to \reals$ and $y: \Omega \to \reals$
define random variables on $\Omega$, and we consider both the case
in which they can be arbitrary non-negative functions, and the
case of {\em independent utilities}, when they
are independent random variables.
\end{itemize}

In a later section, we will specialize the formalism to the
{\em binary model} discussed in Section \ref{intro}, in which
each distribution $\mu_i$ is supported on a two-element
set $\{\omega_{0i},\omega_{1i}\}$ such that
$(x(\omega_{0i}), y(\omega_{0i})) = (0,0)$.
In this case we will let $(x_i,y_i)$ denote the pair
$(x(\omega_{1i}),y(\omega_{1i}))$. The binary model captures
a setting in which the feasibility of the $i^{\text{th}}$ solution is
unknown until the agent investigates it, but
the value of the solution to both parties
(if feasible) is known {\em a priori}.

\subsection{A General Definition of Mechanisms for the Principal and Agent}

Let us now formalize how the principal and the agent interact,
resulting in the principal's selection of a solution.
Thus far, our discussion in Section \ref{intro} has focused on
interactions of a very structured form:
the agent draws a set of samples from $\Omega$;
the agent selects one of these samples to present to the principal;
and the principal accepts or rejects it.
But it would be useful to be able to consider
more general formulations for their allowed interactions,
within which the transmission of a single proposal from the agent
to the principal is a particular special case.

To do this, we begin by defining a {\em mechanism} as follows.
A mechanism $M = (\Sigma,g)$
defines a set of signals, $\Sigma$, that
the agent may send, and an allocation function
$g : \Sigma \to \Omega_+$ that specifies
which solution the principal will choose
given the agent's signal.
In such a mechanism, a {\em strategy} for the
agent is specified by a mapping $\sigma : \Omega^* \to \Sigma$,
where $\Omega^*$ denotes the set of finite sequences over $\Omega$,
such that $\sigma(\omega_1,\ldots,\omega_n)$
represents the signal the agent sends
if he sampled $n$ solutions and observed
the sequence $\omega_1,\ldots,\omega_n$.

Suppose the agent observes sequence
$\omega_1,\ldots,\omega_n$ and sends signal
$\sigma$, resulting in outcome $\omega = g(\sigma)$.
In this case, the principal's and agent's utilities
are $x(\omega)$ and $y(\omega)$, respectively, if
$\omega \in \{\omega_1,\ldots,\omega_n\} \cup \{\bot\}$.
Otherwise the principal's utility is 0 and the agent's
is $-1$. In other words, we assume that if the mechanism results
in the principal selecting a solution that was
never sampled by the agent, that solution cannot
be adopted. Instead the status quo is
preserved and the agent suffers a penalty.
This assumption is consistent with our assumption
that the principal lacks the power to directly
search for solutions herself; she can only adopt
solutions that the agent has discovered.

In models with costly sampling, the specification
of an agent's strategy must also include a
sequential policy $\pi$ for deciding
which sample (if any) to observe next,
given the set of samples already observed.
The principal's and agent's utilities are both
diminished by the sum of costs $c_i$ for the
samples $i$ that the agent observed when
running policy $\pi$. (We deduct this sum
from the principal's utility because we
think of the cost incurred by the agent in searching
for a solution as a kind of ``waste'' that the
principal views as detracting from the overall utility.)

Under our definition of mechanisms, the sequence of
solutions sampled by the agent leads to a signal
(via the agent's strategy $\sigma$), and this
signal leads to a solution in $\Omega_+$ (via
the principal's allocation function $g$).
Composing these two functions, we get a mapping
from the agent's sampled solutions to a single solution:

\begin{defn}[interim allocation function]
\label{interim-alloc}
  If $M$ is a mechanism and $\sigma$ is an
  agent's strategy, the interim allocation
  function of the pair $(M,\sigma)$ is the
  mapping $\ialloc{M,\sigma} : \Omega^* \to \Omega_+$ obtained
  by composing the strategy $\sigma$ with
  the allocation function $g$. In other words,
  $\ialloc{M,\sigma}(\omega_1,\ldots,\omega_n)$
  is the outcome resulting from mechanism $M$,
  when the agent draws sample sequence
  $(\omega_1,\ldots,\omega_n)$ and plays
  according to $\sigma$.
\end{defn}

\subsection{Single Proposal Mechanisms}
\label{sec:single-proposal}

We now show that there is a sense in which it is without
loss of generality
to focus on interactions in which the agent proposes
a single solution from among the ones they sampled,
and the principal either accepts or rejects it.
To do this, we define a simple type of
mechanism called a {\em single proposal
mechanism}, and we show in \Cref{spm}
below that any other mechanism
can be simulated by a single proposal
mechanism.

\begin{defn}
\label{def:spm}
  A {\em single-proposal mechanism} with
  eligible set $R \subseteq \Omega$ is a
  mechanism in which the agent proposes
  one outcome, and the mechanism accepts
  this proposal if and only if it belongs
  to $R$. More formally, $M = (\Sigma,g)$
  is a single-proposal mechanism with
  eligible set $R$ if $\Sigma = \Omega_+$
  and $g$ restricts to the identity
  function on $R$ and the constant
  function $\omega \mapsto \bot$ on
  $\Omega_+ \setminus R$.
\end{defn}

\begin{lem} \label{spm}
  If $M$ is any mechanism and $\sigma$ is
  any strategy constituting a best response to $M$,
  then there exists a single proposal mechanism
  $M'$ and a best response $\sigma'$ to $M'$,
  such that the interim allocation functions
  $\ialloc{M,\sigma}$ and $\ialloc{M',\sigma'}$
  are identical.
\end{lem}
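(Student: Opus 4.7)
The plan is to construct $M'$ and $\sigma'$ directly from $M$ and $\sigma$ by taking the eligible set to be the image of the interim allocation function restricted to $\Omega$, namely
\[
R = \left\{ \omega \in \Omega : \omega = f_{M,\sigma}(\omega_1,\ldots,\omega_k) \text{ for some } (\omega_1,\ldots,\omega_k) \in \Omega^* \right\},
\]
and letting the agent's new strategy be $\sigma'(\omega_1,\ldots,\omega_n) = f_{M,\sigma}(\omega_1,\ldots,\omega_n)$, with the same sequential sampling policy as under $(M,\sigma)$ in the costly-sampling variants. With this construction, the identity $f_{M',\sigma'} = f_{M,\sigma}$ is immediate: $\sigma'$ signals some $\omega \in R \cup \{\bot\}$, and the single-proposal rule outputs $\omega$ if $\omega \in R$ and $\bot$ otherwise, matching $f_{M,\sigma}$ in both cases.

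The substantive step is to verify that $\sigma'$ is a best response to $M'$. Fix any realized sample sequence $(\omega_1,\ldots,\omega_n)$ and compare the payoffs of every possible deviation in $M'$ to the payoff of $\sigma'$. A signal in $(\Omega_+ \setminus R) \cup \{\bot\}$ yields outcome $\bot$ and utility $0$; a signal $\tilde\omega \in R \cap \{\omega_1,\ldots,\omega_n\}$ yields outcome $\tilde\omega$ with utility $y(\tilde\omega)$; a signal in $R \setminus \{\omega_1,\ldots,\omega_n\}$ triggers the mis-sampling penalty and is dominated by signalling $\bot$. For each candidate deviation $\tilde\omega \in R \cap \{\omega_1,\ldots,\omega_n\}$, the definition of $R$ furnishes a pre-image $(\omega_1'',\ldots,\omega_k'')$ with $f_{M,\sigma}(\omega_1'',\ldots,\omega_k'') = \tilde\omega$, so sending the signal $\sigma(\omega_1'',\ldots,\omega_k'')$ in $M$ after actually drawing $(\omega_1,\ldots,\omega_n)$ is an admissible deviation that produces outcome $\tilde\omega$ with no penalty, since $\tilde\omega$ lies in the realized samples. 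Because $\sigma$ is a best response in $M$, the utility of $\sigma$ on $(\omega_1,\ldots,\omega_n)$ in $M$ must be at least $y(\tilde\omega)$, and this utility coincides with the utility of $\sigma'$ on the same sequence in $M'$. Taking the maximum over $\tilde\omega$, and using non-negativity of $y$ to dispose of the $\bot$-deviation, yields the per-sequence comparison; an analogous pull-back argument replaces any alternative sampling policy $\tilde\pi$ for $M'$ by the same policy executed inside $M$, with signalling supplied by $\sigma$, so no $(\tilde\pi,\tilde\sigma')$ can strictly improve on $(\pi,\sigma')$ in $M'$.

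The main obstacle is precisely this pull-back from deviations in $M'$ to deviations in $M$. The difficulty is that the signal space of $M$ is an arbitrary set $\Sigma$ while the signal space of $M'$ is $\Omega_+$, so there is no a priori map between them; one must exploit the defining property that every $\omega \in R$ arises as $f_{M,\sigma}(\omega_1'',\ldots,\omega_k'')$ for some sequence in order to construct a pullback signal in $\Sigma$. A secondary subtlety lies in the boundary case where the $M'$-deviation proposes $\bot$: handling it cleanly requires either non-negativity of $y$ or the observation that $\sigma$ on any given sample sequence already attains utility at least $0$ in $M$ (since the agent could always mimic a $\bot$-producing pre-image whenever one exists), which rules out the $\bot$-deviation ever strictly dominating $\sigma'$.
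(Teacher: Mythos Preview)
Your proposal is correct and takes essentially the same approach as the paper: define $R$ as the range of $f_{M,\sigma}$ (restricted to $\Omega$), set $\sigma' = f_{M,\sigma}$, and verify best-response by pulling back each deviation $\tilde\omega \in R$ to a signal $\sigma(\omega_1'',\ldots,\omega_k'')$ in $M$ that produces the same outcome, then invoke optimality of $\sigma$. Your treatment is in fact slightly more careful than the paper's, since you explicitly separate the cases $\tilde\omega \in R \cap \{\omega_1,\ldots,\omega_n\}$ versus $\tilde\omega \in R \setminus \{\omega_1,\ldots,\omega_n\}$ (handling the mis-sampling penalty) and you also address the sampling-policy component in the costly model, which the paper's proof does not mention.
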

\begin{proof}
  Let $R$ be the range of the interim allocation
  function $\ialloc{M,\sigma}$, i.e.\ the set of
  all possible outcomes of $M$, other than $\bot$,
  when the agent acts according to $\sigma$. Define
  $M'$ to be the single-proposal mechanism with
  eligible set $R$. Let $\sigma'$ be the strategy
  in which the agent observes his tuple of samples,
  $\bm{\omega}=(\omega_1,\ldots,\omega_n)$, and chooses
  strategy $\sigma'(\bm{\omega})=g(\sigma(\bm{\omega}))$. By
  construction the interim allocation functions
  $\ialloc{M,\sigma}$ and $\ialloc{M',\sigma'}$
  are identical. To prove that $\sigma'$ is a
  best response to $M'$, consider any
  $\bm{\omega} \in \Omega^n$ and any
  $\nu \neq \sigma'(\bm{\omega})$.
  Let $y_0 = y(g'(\sigma'(\bm{\omega})))$
  denote the agent's utility when playing
  according to $\sigma'$; note that $y_0 \ge 0$.
  We wish to show
  that the agent cannot benefit by playing $\nu$
  instead, i.e.
\begin{equation} \label{eq:spm.1}
  y(g'(\nu)) \le y_0.
\end{equation}
  If $\nu \not\in R$ then $g'(\nu) = \bot$
  which implies~\eqref{eq:spm.1} since $y_0 \ge 0$.
  If $\nu \in R$ then $\nu = g(\tilde{\sigma})$
  for some $\tilde{\sigma} \in \Sigma$.
  Now~\eqref{eq:spm.1} follows because
  strategy $\sigma$ is a best response for mechanism $M$,
  and $y(g'(\nu))$ denotes the agent's utility
  when playing strategy $\tilde{\sigma}$ in $M$
  whereas $y_0$ denotes his utility when playing
  $\sigma$.
\end{proof}

\omt{

In our model two parties, a principal and an agent,
search for a solution to a problem by interacting
to select (at most) one element from a set
$\Omega$ of potential solutions. The principal's utility
if $\omega \in \Omega$ is selected
is given by a function $x : \Omega \to \reals$, and
the agent's utility is given by a function
$y: \Omega \to \reals$. To formalize the possibility
that the principal selects no solution (i.e., perpetuating
the status quo) we identify this
possiblity with a special  ``null outcome'', denoted
$\bot$, and we extend the utility functions from
$\Omega$ to $\Omega_+ \defeq \Omega \cup \{\bot\}$
by setting $x(\bot) = y(\bot) = 0$.

The set $\Omega$ is a probability
space, endowed with one or more probability measures
$\mu_1,\mu_2,\ldots,\mu_m$.
The agent has the power to draw independent samples from
these distributions,
whereas the principal can neither draw samples
from $\Omega$ nor directly observe the
outcome of the agent's sampling.
Instead, the principal designs a mechanism
to select one of the solutions sampled by the agent,
based on the information that the agent reports.
In this work we focus on deterministic mechanisms
without monetary transfers.

Several variations of this basic model will feature
in our work.
\begin{enumerate}
\item In the {\em full information} case, the principal
knows both parties' utilities for every solution, i.e.\ the
functions $x$ and $y$ are both known to the principal.
In the {\em private information} case, the principal
knows only her\footnote{Throughout this paper we use
feminine pronouns for the principal and masculine
pronouns for the agents.} utility function, $x$.
\item In a model with {\em costly sampling}, the agent
experiences a cost $c_i \geq 0$ for drawing a sample
from $\mu_i$.
\item In a model with {\em sampling budget $n$}, the
agent can sample from at most $n$ distributions,
regardless of cost.
\item In a model with {\em i.i.d. samples} the
distributions $\mu_1, \mu_2, \ldots, \mu_m$ are all
identical. If so, we will let
$\mu \stackrel{\Delta}{=} \mu_1 = \cdots = \mu_m$.
\item In a model with {\em independent utilities}
the principal's and agent's utilities, $x$ and $y$,
are independent random variables under each of the
distributions $\mu_1,\ldots,\mu_m$.
\item In a model with {\em binary outcomes}
\bkcomment{There must be a better name for this model}
each distribution $\mu_i$ is supported on a two-element
set $\{\omega_{0i},\omega_{1i}\}$ such that
$(x(\omega_{0i}), y(\omega_{0i})) = (0,0)$.
In this case we will let $(x_i,y_i)$ denote the pair
$(x(\omega_{1i}),y(\omega_{1i}))$. The case of
binary outcomes models a setting in which the
feasibility of the $i^{\text{th}}$ solution is
unknown until the agent investigates it, but
the value of the solution to both parties
(if feasible) is known {\em a priori}.
\end{enumerate}

Before doing this,
it is useful to first note some of the
ways in which our basic model can be generalized or specialized.
\begin{itemize}
\item We will initially consider the case of a single
probability measure $\mu$ on $\Omega$, but it is also useful
to consider cases in which there are
multiple probability measures $\mu_1, \mu_2, \ldots, \mu_m$ on $\Omega$,
and the agent has the power to draw independent samples from
any of these distributions.
\item We will generally assume there is a {\em sampling budget} of
$n$ on the number of samples that the agent can draw.
In some of our models, we will also introduce a {\em sampling cost}
$c \geq 0$ for each draw by the agent --- or in the case of
multiple probability measures, a cost $c_i \geq 0$ for sampling from $\mu_i$.
\item We consider both the {\em full-information case} --- in which
the principal knows both the functions
$x : \Omega \to \reals$ and $y: \Omega \to \reals$, and hence
can evaluate the utility of a solution $\omega$ to both herself
and to the agent --- and the {\em limited-information case},
in which the principal only knows her own utility function $x$.
\item The functions $x : \Omega \to \reals$ and $y: \Omega \to \reals$
define random variables on $\Omega$, and we consider both the case
in which they can be arbitrary non-negative functions, and the
case of {\em independent utilities}, when they
are independent random variables.
\end{itemize}

between the principal and the agent:

detracting from the
social welfare that the principal is
principal as wanting to optimize
the net social welfare of a group that includes
the agent.)

from sequences of solutions to signals

}

\section{Analyzing Delegated Search Via Prophet Inequalities}

In this section we develop a formal link between
delegated search mechanisms and prophet inequalities.
It turns out that the relevant prophet inequalities
involve random variables arriving at discrete points
in continuous time, rather than the usual assumption
that they arrive at time points $1,2,\ldots,n$.
Accordingly, we will begin by explaining
the formal model of continuous-time prophet
inequalities in \Cref{ctpe} below. Then
in \Cref{prophet-reduction}
we explain the reduction from delegated
search (in the distributional model) to
continuous-time prophet inequalities.

\subsection{Continuous-time prophet inequalities}
\label{ctpe}

In this section we will be concerned with problems
which involve designing a selection rule to choose
(at most) one element from a random finite set
of pairs $(x_i,t_i) \in \reals_+ \times [0,1]$,
with the goal of maximizing the expected $x$-coordinate
of the chosen element. The $t$-coordinate is
thought of as a time coordinate, and we will
generally (but not exclusively) be concerned
with selection rules that make their choice
without looking into the future, as is ordinarily
the case in the analysis of prophet inequalities.

\begin{defn}[selection rules]
\label{ctsp}
  A \emph{selection rule} is a function $\rho$ from finite
  subsets of $\reals_+ \times [0,1]$ to the set
  $(\reals_+ \times [0,1]) \cup \{\bot\}$, with
  the property that $\rho(S) \in S \cup \{\bot\}$
  for every $S$.

  A \emph{stopping rule} is a selection rule that
  chooses element $(x,t)$ from set $S$ without
  looking at the set of elements whose time coordinate
  is greater than $t$. Formally, $\rho$ is a stopping
  rule if it satisfies the following property: for any
  $(x,t) \in \reals_+ \times [0,1]$ and any two sets $S,S'$
  such that $S \cap (\reals_+ \times [0,t]) = S' \cap
  (\reals_+ \times [0,t])$, we have
  $$ \rho(S) = (x,t) \; \Longleftrightarrow \; \rho(S') = (x,t). $$

  An \emph{oblivious stopping rule} with eligible set
  $Q \subseteq \reals_+ \times [0,1]$
  is a stopping rule $\rho_Q$ such that for every $S$,
  $\rho_Q(S)$ is an earliest element of $S \cap Q$
  (i.e., an element of that set with minimum $t$ coordinate)
  or $\rho_Q(S) = \bot$ if $S \cap Q$ is empty.

  A \emph{threshold stopping rule} with threshold $\theta$
  is an oblivious
  stopping rule whose eligible set is of the form
  $(\theta,\infty) \times [0,1]$ or
  $[\theta,\infty) \times [0,1]$.
\end{defn}

\begin{defn}[CTSPs and prophet inequalities]
  A \emph{continuous-time selection problem} (CTSP) is
  an ordered pair $({\mathscr D, R})$
  where  ${\mathscr D}$ is a set of probability distributions
  over finite subsets of $\reals_+ \times [0,1]$,
  and  ${\mathscr R}$ is a set of selection rules.

  A CTSP $({\mathscr D,R})$ satisfies a \emph{prophet inequality
  with factor $\alpha$} if it is the case that for every
  $\D \in {\mathscr D}$ there exists some
  $\rho \in {\mathscr R}$ such that
  $$
     \expect_{S \sim \D}[X_{\rho(S)}] \geq \alpha \cdot
     \expect_{S \sim \D}[X_{\ast(S)}] .
  $$
  Here the random variable $X_{\rho(S)}$
  is defined by specifying that if $\rho(S) = (x,t)$
  then $X_{\rho(S)} = x$, and if $\rho(S) = \bot$ then
  $X_{\rho(S)} = 0$.
  The random variable $X_{\ast(S)}$ is defined
  to be $\max \{ x | (x,t) \in S \}$.
\end{defn}

We now present the prophet inequalities we will use
in this work. To state them,
we define the following families of stopping rules
and distributions on subsets of $\reals_+ \times [0,1]$.
\begin{itemize}
\item ${\mathscr R}_{\text{obliv}}$ is the family
  of oblivious stopping rules.
\item ${\mathscr R}_{\text{thresh}}$ is the family
  of threshold stopping rules.
\item ${\mathscr D}_{\text{ind}}$ is the family
  of random sets whose elements are
  obtained by sampling independently from $n$
  joint distributions. In other words,
  a distribution $\D \in {\mathscr D}_{\text{ind}}$
  is specified by giving a positive number $n$,
  a tuple of joint distributions $\D_1,\ldots,\D_n$
  on $\reals_+ \times [0,1]$, and defining $\D$ to
  be the distribution on $n$-element sets obtained
  by drawing one sample independently from each of
  $\D_1,\ldots,\D_n$.
\item ${\mathscr D}_{\text{iid},n}$ is the family
  of random sets whose $n$ elements are i.i.d.\ samples
  from an atomless distribution with $x$ and $t$ independent,
  i.e.\ a distribution on $\reals_+ \times [0,1]$
  which is a product of atomless distributions.
\item ${\mathscr D}_{\text{iid}}$ is the union of
  ${\mathscr D}_{\text{iid},n}$ over all $n \ge 1$.
\end{itemize}

Our first prophet inequality is
\citeauthor{samuel-cahn}'s
(\citeyear{samuel-cahn})
famous prophet inequality for
threshold stopping rules.

\begin{thm} (\citet{samuel-cahn}) \label{1/2}
  There is a prophet inequality
  with factor $\frac12$ for
  $({\mathscr D}_{\text{ind}}, \, {\mathscr R}_{\text{thresh}})$.
\end{thm}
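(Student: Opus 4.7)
The plan is to adapt the classical threshold argument of \citet{samuel-cahn} to this continuous-time setting. Fix $\mathbf{D} \in \mathscr{D}_{\text{ind}}$, with samples $(x_1, t_1), \ldots, (x_n, t_n)$ drawn independently from $\mathbf{D}_1, \ldots, \mathbf{D}_n$, and write $M = \max_i x_i$. I would take $\theta^*$ to be a median of $M$, i.e., any value satisfying $\Pr[M > \theta^*] \le \tfrac{1}{2} \le \Pr[M \ge \theta^*]$ (such a $\theta^*$ always exists), and then exhibit either the weak or the strict threshold rule at $\theta^*$ as the one attaining factor $\tfrac{1}{2}$.

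For either threshold rule $\rho$ at $\theta^*$, write $V = X_{\rho(S)}$ and let $A_i$ be the event that no $j \ne i$ with $t_j < t_i$ has $x_j$ meeting the threshold. The key identity, obtained by decomposing $x_i = \theta^* + (x_i - \theta^*)$ on the event that $x_i$ meets the threshold and noting that at most one index $i$ can both meet the threshold and satisfy $A_i$, is
\[
V \;=\; \theta^* \cdot \mathbf{1}(\rho \text{ selects some } i) \;+\; \sum_i (x_i - \theta^*)^+ \mathbf{1}(A_i).
\]
Taking expectations gives $\mathbb{E}[V] = \theta^* p + \sum_i \mathbb{E}[(x_i - \theta^*)^+ \mathbf{1}(A_i)]$, where $p$ equals $p_w := \Pr[M \ge \theta^*]$ for the weak rule and $p_s := \Pr[M > \theta^*]$ for the strict rule.

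The crucial use of independence comes when bounding the second term. Because $A_i$ contains the subevent $\{x_j < \theta^* \ \forall j \ne i\}$, which is independent of $(x_i, t_i)$ by the product structure of $\mathscr{D}_{\text{ind}}$, we obtain the almost-sure bound $\Pr[A_i \mid t_i] \ge \prod_{j \ne i} \Pr[x_j < \theta^*] \ge \Pr[M < \theta^*] = 1 - p_w$. Since $(x_i - \theta^*)^+ \ge 0$, this yields $\mathbb{E}[V_{\text{weak}}] \ge \theta^* p_w + (1 - p_w) B$ and $\mathbb{E}[V_{\text{strict}}] \ge \theta^* p_s + (1 - p_s) B$, where $B := \sum_i \mathbb{E}[(x_i - \theta^*)^+]$. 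Combining these with the standard bound $\mathbb{E}[M] \le \theta^* + B$ (from $M \le \theta^* + \sum_i (x_i - \theta^*)^+$) and the algebraic identity
\[
p \theta^* + (1-p) B \;=\; \tfrac{\theta^* + B}{2} + \bigl(p - \tfrac{1}{2}\bigr)(\theta^* - B),
\]
together with $p_s \le \tfrac{1}{2} \le p_w$, shows that the weak rule achieves the factor-$\tfrac{1}{2}$ bound whenever $\theta^* \ge B$ while the strict rule does so whenever $\theta^* \le B$; picking the one that matches the sign of $\theta^* - B$ completes the proof.

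The main obstacle, compared to the classical Samuel-Cahn setting, is the continuous-time indexing. There ``earliest'' means smallest integer index, so the probability that no earlier arrival meets the threshold factors cleanly through independence across coordinates. Here ``earliest'' refers to smallest $t_j$ and the event $A_i$ genuinely depends on the random $t_i$; the resolution is to bound $\Pr[A_i \mid t_i]$ by the probability of a strictly smaller $(x_i, t_i)$-independent event, which loses nothing because $1 - p_w$ is exactly the lower bound we need. A secondary subtlety is handling atoms in the distribution of $M$, which may force $p_s < \tfrac{1}{2} < p_w$ and is what necessitates the case split between strict and weak thresholds.
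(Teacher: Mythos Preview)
Your argument is correct and is essentially the same Samuel-Cahn median-threshold proof the paper gives. The only cosmetic differences are that the paper randomizes between the weak and strict rules so that the acceptance probability is exactly $\tfrac12$ (and then takes the better of the two) rather than case-splitting on the sign of $\theta^*-B$, and it uses the time-agnostic event ``$x_i$ is the \emph{unique} sample meeting the threshold'' in place of your $A_i$, which makes the whole calculation insensitive to the $t$-coordinates and to ties therein.
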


The second is an improved
prophet inequality for threshold stopping rules
when samples are drawn i.i.d.\ from atomless product
distributions; it can be derived as a corollary of
either \cite[Theorem 19]{ehsani-soda18} or
\cite[Corollary 2.2]{correa-prophet-ineq}.

\begin{thm} (\citet{correa-prophet-ineq,ehsani-soda18}) \label{1-1/e}
  There is a prophet inequality
  with factor $1 - \frac1e$ for
  $({\mathscr D}_{\text{iid}}, \, {\mathscr R}_{\text{thresh}})$.
\end{thm}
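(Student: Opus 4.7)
The plan is to reduce the continuous-time problem to a single-threshold stopping problem for an i.i.d.\ sequence in discrete time, and then invoke the existing i.i.d.\ prophet inequality of \citet{ehsani-soda18} and \citet{correa-prophet-ineq}.

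First I would unpack what a threshold stopping rule $\rho_Q$ with eligible set $Q = (\theta,\infty)\times[0,1]$ (or its closed variant) does on an input $S$ drawn from some $\D \in \mathscr{D}_{\text{iid},n}$. Eligibility depends only on the $x$-coordinate, and among eligible elements $\rho_Q$ selects the one with minimum $t$-coordinate. Since $\D$ is a product of two atomless distributions, $S$ has the same law as $\{(X_i,T_i)\}_{i=1}^n$ where $X_1,\ldots,X_n$ are i.i.d.\ from the $x$-marginal and, independently, $T_1,\ldots,T_n$ are i.i.d.\ from the atomless $t$-marginal. Atomlessness ensures that almost surely the $T_i$'s are distinct and no $X_i$ equals $\theta$, so the earliest eligible element is unambiguously defined and the strict and weak threshold rules coincide almost surely. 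The rule's payoff equals $X_I$ with $I = \arg\min_i\{T_i : X_i > \theta\}$ when this set is nonempty and $0$ otherwise, while the benchmark $X_{\ast(S)}$ equals $\max_i X_i$.

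Second, since the $T_i$'s are independent of the $X_i$'s and are i.i.d.\ atomless, ranking the indices by $T$-value yields a uniformly random permutation that is independent of $(X_1,\ldots,X_n)$. Exchangeability of the i.i.d.\ vector $(X_1,\ldots,X_n)$ then implies that $X_I$ has the same distribution as $X_\tau$, where $\tau$ is the first index in the natural order $1,2,\ldots,n$ with $X_\tau > \theta$ (and the payoff is $0$ if no such index exists). This places us in the classical single-threshold i.i.d.\ prophet inequality setting, and \citet[Theorem 19]{ehsani-soda18} (equivalently, \citet[Corollary 2.2]{correa-prophet-ineq}) guarantees a choice of $\theta$ depending only on the $x$-marginal for which the expected payoff is at least $(1 - 1/e)\,\expect[\max_i X_i]$. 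Taking the union over $n \ge 1$ gives the claim for all of $\mathscr{D}_{\text{iid}}$.

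The main obstacle is the distributional identification in the second step: one must argue cleanly that the ``earliest eligible element'' rule in continuous time is equivalent in law to the discrete-time rule that halts at the first $X_i$ exceeding $\theta$, so that a black-box application of the cited result is justified. Atomlessness of both marginals is precisely what makes this work, by eliminating ties and making the open/closed threshold distinction vanish. Once the reduction is in place, the substantive analytical work is entirely encapsulated in the cited i.i.d.\ prophet inequality.
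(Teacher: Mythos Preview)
Your reduction is correct: for threshold stopping rules the eligibility criterion ignores the time coordinate, so the continuous-time ordering induced by the atomless i.i.d.\ $T_i$'s is just a uniformly random permutation independent of the $X_i$'s, and exchangeability collapses the problem to the discrete-time i.i.d.\ single-threshold setting. The paper itself makes exactly this remark in the main text (``the distinction between discrete time and continuous time is immaterial from the standpoint of analyzing threshold stopping rules'').

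However, the paper's actual proof in the appendix takes a different route. Rather than citing the discrete-time result as a black box, it gives a self-contained argument: it fixes the threshold $\theta$ so that $\Pr(\tau>1)=e^{-1/n}$ (equivalently $\Pr(\max_i X_i > \theta)=1-\tfrac1e$), writes both $\expect[X_\tau]$ and $\expect[X_*]$ as $\int_0^\infty \Pr(\,\cdot\, > y)\,dy$, and compares the two tail probabilities pointwise in $y$, treating $y<\theta$ and $y>\theta$ separately. For $y<\theta$ one gets $\Pr(X_\tau>y)=1-\tfrac1e$ exactly; for $y>\theta$ a short geometric-sum computation together with $1-e^{-1/n}<1/n$ yields $\Pr(X_\tau>y)\ge(1-\tfrac1e)\Pr(X_*>y)$. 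Your approach is the cleaner structural reduction and is perfectly adequate as a proof; the paper's approach exists because the authors explicitly want the appendix to be self-contained rather than to rely on the cited references.
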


Our third prophet inequality again pertains
to the case when samples are drawn
i.i.d.\ from atomless product distributions,
but it allows for oblivious stopping rules
rather than threshold stopping rules.
The discrete-time counterpart to this
prophet inequality can be found in
\cite{hill-kertz,kertz86,correa-prophet-ineq}.

\begin{thm} \label{0.745}
  Let $\alpha = 0.745\dots$ be the solution
  to $\int_{0}^{1} \left( y  - y \ln y + 1 - \frac{1}{\alpha} \right)^{-1}
         \, dy = 1$.
  There is a prophet inequality with factor
  $(1 - \frac6n) \alpha$
  for $({\mathscr D}_{\text{iid},n}, \, {\mathscr R}_{\text{obliv}})$.
\end{thm}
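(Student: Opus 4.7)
The plan is to exhibit a specific oblivious stopping rule $\rho_Q$ and verify that it achieves the stated factor. First, because the $t$-marginal is atomless, a monotone reparameterization of the $t$-axis preserves the class of oblivious stopping rules, so I may assume without loss of generality that $t_i \sim \mathrm{Unif}[0,1]$ i.i.d., independent of the $x_i$'s, which are i.i.d.\ from the atomless marginal $F$. I further restrict to eligible sets of the form $Q = \{(x,t) : x \geq \phi(t)\}$ for some threshold function $\phi : [0,1] \to \reals_+$. This reduction is without loss of generality because, for any $Q$ with $t$-slices $Q_t = \{x : (x,t) \in Q\}$, replacing $Q_t$ by the upper tail of $F$ with the same probability weakly improves the expected $x$-value of the earliest accepted sample, since the $x$-values are i.i.d.\ and the oblivious rule selects the earliest eligible item.

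The central step is choosing $\phi$. Writing $p(t) = 1 - F(\phi(t))$ and $P(t) = \int_0^t p(s)\,ds$, the probability that no sample is accepted before time $t$ equals $(1 - P(t))^n$, and so the rule's expected payoff admits the closed-form expression
\[
\mathrm{ALG}(\phi) \;=\; n \int_0^1 \bigl(1 - P(t)\bigr)^{n-1}\, \expect\!\bigl[x \cdot \mathbb{1}_{x \geq \phi(t)}\bigr]\,dt,
\]
while the prophet's value is $\mathrm{OPT} = \int_0^\infty (1 - F(x)^n)\,dx$. I would select $\phi$ so that $p(t)$ (after the rescaling $t \mapsto nt$) satisfies the ODE whose first integral is the equation $\int_0^1 (y - y \ln y + 1 - 1/\alpha)^{-1}\,dy = 1$ defining $\alpha$; this is the continuous-time analog of the Hill--Kertz/Kertz/Correa recurrence whose fixed point is $0.745\ldots$. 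With this choice, substitution reduces $\mathrm{ALG}/\mathrm{OPT}$ to an explicit ratio of integrals that converges to $\alpha$ as $n \to \infty$.

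The hardest step will be the explicit finite-$n$ bound $\mathrm{ALG}/\mathrm{OPT} \geq (1 - 6/n)\alpha$ uniformly in $F$. A clean route is to introduce the continuous-time surrogate $\widetilde{\mathrm{ALG}} = n \int_0^1 e^{-nP(t)}\,\expect[x \cdot \mathbb{1}_{x \geq \phi(t)}]\,dt$, prove $\widetilde{\mathrm{ALG}} \geq \alpha \cdot \mathrm{OPT}$ directly from the defining integral equation for $\alpha$ (where the ODE becomes exact), and then establish $\mathrm{ALG} \geq (1 - 6/n)\widetilde{\mathrm{ALG}}$ using elementary inequalities comparing $(1 - x/n)^{n-1}$ to $e^{-x}$ on $[0, n]$, combined with control of boundary contributions near $t = 0$ (where $\phi(t) \to \infty$ and the rule rejects essentially everything) and near $t = 1$ (where $\phi(t) \to 0$ and the rule degenerates). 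The constant $6$ most plausibly arises as the sum of three such $O(1/n)$ corrections, one each from the $(n-1)$-versus-$n$ exponent discrepancy and the two boundary regimes; ensuring these bounds are uniform in $F$ (including heavy-tailed regimes where $\mathrm{OPT}$ may be dominated by extreme tails) is the main delicate aspect of the argument.
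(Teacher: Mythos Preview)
Your overall architecture matches the paper's: reparameterize so $t_i \sim \mathrm{Unif}[0,1]$, take an eligible set of the form $\{x \geq \phi(t)\}$ with acceptance probability $p(t)=1-F(\phi(t))$, choose $p$ via an ODE tied to the integral defining $\alpha$, and control the finite-$n$ gap between $(1-P)^{n}$ and $e^{-nP}$ by elementary estimates. The paper carries this out as a tail-by-tail comparison $\Pr(x_\tau > y) \geq \alpha_n \Pr(x_* > y)$; after the substitution $q = n(1-F(y))$ this becomes a one-parameter inequality in $q \in [0,n]$, and your $\mathrm{ALG}$-versus-$\mathrm{OPT}$ framing must reduce to the same pointwise statement once you demand uniformity in $F$.

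The proposed clean split, however, does not hold as stated. The continuous surrogate does \emph{not} yield $\widetilde{\mathrm{ALG}} \geq \alpha \cdot \mathrm{OPT}$ exactly. If you use the $\alpha$-ODE with parameter $\beta = 1/\alpha - 1$, then $z(t) = np(t)$ blows up before $t=1$ and must be truncated; the paper instead uses a finite-$n$ parameter $\beta_n$ chosen so that $z(1)=n$, and the exact surrogate identity is then
\[
\int_0^1 \min\{z(t),q\}\,e^{-Z(t)}\,dt \;=\; \frac{1}{1+\beta_n}\bigl(1 - e^{-q} - qe^{-n}\bigr),
\]
carrying both a $-qe^{-n}$ defect and leading constant $(1+\beta_n)^{-1}$ rather than $\alpha$. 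Since $\mathrm{OPT}$ involves $1-(1-q/n)^n$ rather than $1-e^{-q}$, there is a further correction on that side as well. Consequently the constant $6$ does not come from ``boundary regimes'' in $t$ --- the ODE is engineered so that $p(0)=0$ and $p(1)=1$, and no boundary-layer analysis is needed. It arises instead from the product $(1-\tfrac{2}{n})^2(1-\tfrac{1}{n})(1-\tfrac{1}{n}) > 1 - \tfrac{6}{n}$: the factor $(1-\tfrac2n)^2$ bounds $\exp_n(-Z)/e^{-Z}$, one $(1-\tfrac1n)$ absorbs the $qe^{-n}$ term, and the other handles $1-\exp_n(-q)$ versus $1-e^{-q}$. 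The first of these requires the a priori bound $Z(t) = nP(t) < 2$ for all $t \in [0,1]$, which is the step your outline omits entirely and without which the $\exp_n$-versus-$\exp$ comparison is not uniform.
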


Since the distincton between discrete time and continuous
time is immaterial from the standpoint of analyzing
threshold stopping rules, the first two of these theorems
are equivalent to the existing results for discrete-time
prophet inequalities that we have cited before the theorem
statements. On the other hand, because oblivious stopping
rules are less powerful than general stopping rules,
the third theorem is not an immediate consequence of
the corresponding discrete-time prophet inequality.
Proofs of all three theorems are included in
\Cref{appdx-prophet}; since \Cref{0.745} is the only
novel result among the three theorems, the proofs of the
other two are included only for the purpose of making
our paper self-contained.

To complete this section, we will describe the
stopping rules which achieve the bounds stated
in the three prophet inequalities above.

When the points $\{(x_i,t_i)\}_{i=1}^n$ are independent
but not necessarily identically distributed, choose
threshold $\theta_{1/2}$ to be the median of the
distribution of $\max \{x_i\}$. In other words,
$\theta_{1/2}$ is defined such that the events
$\max \{x_1,\ldots,x_n\} > \theta_{1/2}$ and
$\max \{x_1,\ldots,x_n\} < \theta_{1/2}$ both
have probability at most $\frac12$. Consider the threshold
stopping rule that selects the first pair $(x_i,t_i)$
with $x_i > \theta_{1/2}$, and consider the one whose
selection criterion is $x_i \geq \theta_{1/2}$. The proof
of \Cref{1/2} shows that at least one of these two
stopping rules fulfills a prophet inequality with
factor 1/2.

When the points $\{(x_i,t_i)\}_{i=1}^n$ are
i.i.d.\ and the distributions of $x_i$ and $t_i$ are
atomless and independent, with cumulative distribution
functions $F$ and $G$, respectively,
choose threshold $\theta_{1-1/e}$ such that
$\Pr(\max \{x_1,\ldots,x_n\} > \theta_{1-1/e}) = 1-\frac1e$.
The proof of \Cref{1-1/e} shows that the threshold
stopping rule that selects the first pair $(x_i,t_i)$
such that $x_i > \theta_{1-1/e}$ fulfills a prophet
inequality with factor $1-\frac1e$. Now let
$\beta_n$ be the solution to
$$
   \int_0^n \left( 1 + z +
    \beta_n e^z \right)^{-1} \, dz
 = 1 , $$
and let $z_n(s)$ be the solution of the differential
equation
$$
   \frac{d z}{ds} = 1 + z + \beta_n e^z
$$
with initial condition $z(0)=0$.
The oblivious stopping rule that accepts
the first $(x_i,t_i)$ such that
$$ 1 - F(x_i) < z(G(t_i))/n  $$
fulfills a prophet inequality
with factor $(1 - \frac6n) \alpha$.

\subsection{Reducing delegated search to prophet inequalities}
\label{prophet-reduction}

Although delegated search problems and prophet inequalities
appear unrelated at first glance, the tight technical
connection between them is explained by an observation
which is extremely natural in hindsight.
Consider a change of variables that maps the agent's
utility $y$ to a point $t(y) \in [0,1]$, where the
function $t$ is monotonically decreasing. In a
single proposal mechanism with eligible set $R$,
the agent submits the eligible proposal $(x,y)$ with the
highest $y$ value. Similarly, an oblivious stopping
rule with eligible set $Q$ selects the earliest
eligible point $(x,t)$. Since the change of variables
$t(y)$ is monotonically decreasing, the two selection
criteria are equivalent! Thus, designing single proposal
mechanisms that yield high utility for the principal
is equivalent to designing oblivious stopping rules
that yield a high expected value.

In more detail, let $t$ be any continuous, monotonically
decreasing bijection from $[0,\infty)$ to $(0,1]$, for
example $t(y) = e^{-y}$. Under the mapping
$H \,:\, \Omega \to (\reals_+ \times [0,1])$
defined by $H(\omega) = (x(\omega), t(y(\omega)))$,
any distribution on sets of solutions $\{\omega_1,\ldots,\omega_n\}$
induces a distribution $\D$ on sets of pairs
$\{(x_1,t_1),\ldots,(x_n,t_n)\}$. In particular,
our distributional model in which the agent draws
$n$ i.i.d.\ samples from $\Omega$ is mapped, under
this correspondence, to a member of the family of
distributions ${\mathscr D}_{\text{iid},n}$.

There is also a reverse correspondence from
oblivious stopping rules to single proposal
mechanisms and their interim allocation functions.
The oblivious stopping rule $\rho_Q$ with
eligible set $Q$ corresponds to the single
proposal mechanism with eligible set $H^{-1}(Q)$.
More precisely, if $R = H^{-1}(Q)$ and $\sigma$ is
a best response to the single proposal mechanism $M$
with eligible set $R$, then for any sequence of
samples $\bm{\omega} = (\omega_1,\ldots,\omega_n)$,
we have
$$ \rho_Q ( H ( \bm{\omega} ) ) = H ( \ialloc{M,\sigma}(\bm{\omega}) ) .$$
In other words, suppose we run the mechanism $M$; the agent
draws a sequence of samples; and we let the agent choose the best one
(for the agent) that belongs to $R$.
This procedure is equivalent to
running the oblivious stopping rule $\rho_Q$ on the
sequence obtained by transforming all of the agents'
samples to points $(x_i, t_i) = (x_i, t(y_i) )$,
and selecting the earliest such point (ordered by $t_i$)
that belongs to $Q$. Under this correspondence,
threshold stopping rules correspond to single proposal
mechanisms in which a solution is deemed eligible if
the principal's utility exceeds a specified threshold.
Note that this subset of single proposal
mechanisms can be implemented even
when the agent's utility is unobservable.

Combining these observations with \Cref{1/2,1-1/e,0.745},
we obtain the following theorem.

\begin{thm} \label{distributional bounds}
In the distributional model, suppose the agent
draws $n$ i.i.d.\ samples,
and let $x_*$ denote the utility the principal would
attain if she could directly choose her favorite
among these $n$ samples.
\begin{enumerate}
\item There is always a set $X$ of the form
  $(\theta,\infty)$ or $[\theta,\infty)$ such that
  a single proposal mechanism with eligible set
  $\{ \omega \mid x(\omega) \in X \}$ ensures that
  the principal's expected utility is at least
  $\frac12 \expect[x_*].$
\item If the principal and agent have independent
  utilities, each drawn from an atomless distribution,
  then a single proposal mechanism that accepts any
  proposal satisfying $x(\omega) > \theta$, for a
  suitable choice of $\theta$, ensures
  that the principal's expected utility
  is at least $\left( 1 - \frac1e \right) \expect[x_*]$.
\item If the principal and agent have independent
  utilities, each drawn from an atomless distribution,
  and the principal can observe the agent's utility,
  then a single proposal mechanism that accepts
  any proposal satisfying $x(\omega) > \theta(y(\omega))$,
  for a suitable choice of the function $\theta(\cdot)$,
  ensures
  that the principal's expected utility
  is at least $\left( 1 - \frac6n \right) \alpha
  \expect[x_*]$, where $\alpha$ is the constant defined
  in \Cref{0.745}.
\end{enumerate}
\end{thm}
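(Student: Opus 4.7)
The plan is to invoke the reduction machinery of \Cref{prophet-reduction} and apply each of \Cref{1/2,1-1/e,0.745} in turn. Fix any continuous, monotonically decreasing bijection $t : [0,\infty) \to (0,1]$ (for concreteness, $t(y) = e^{-y}$) and let $H(\omega) = (x(\omega), t(y(\omega)))$. Under $H$, the law of $n$ i.i.d.\ samples from $\Omega$ pushes forward to a distribution $\D$ on $n$-element subsets of $\reals_+ \times [0,1]$; this distribution lies in ${\mathscr D}_{\text{ind}}$ in the general case, and in ${\mathscr D}_{\text{iid},n}$ whenever $x$ and $y$ are independent and drawn from atomless distributions (since in that case $t(y)$ is also atomless and independent of $x$). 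Moreover, the benchmark $\expect[x_*] = \expect[\max_i x(\omega_i)]$ coincides exactly with $\expect_{S \sim \D}[X_{\ast(S)}]$, so any prophet inequality for $\D$ directly translates into a bound against the principal's ideal payoff.

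For each of the three parts, I would take the stopping rule guaranteed by the corresponding prophet inequality, pull back its eligible set $Q$ to $R = H^{-1}(Q) \subseteq \Omega$, and let $M$ be the single-proposal mechanism with eligible set $R$. By the correspondence established at the end of \Cref{prophet-reduction}, if $\sigma$ is a best response to $M$ then $\rho_Q(H(\bm{\omega})) = H(\ialloc{M,\sigma}(\bm{\omega}))$ for every sample sequence $\bm{\omega}$; projecting on the first coordinate gives $x(\ialloc{M,\sigma}(\bm{\omega})) = X_{\rho_Q(H(\bm{\omega}))}$, and taking expectations yields the desired inequality. Part~1 follows from \Cref{1/2} applied with the threshold $\theta_{1/2}$ on $x$; the pulled-back eligible set is $\{\omega : x(\omega) \in X\}$ for $X = (\theta_{1/2},\infty)$ or $[\theta_{1/2},\infty)$, which depends only on $x$ and therefore is implementable in the limited-information case. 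Part~2 follows analogously from \Cref{1-1/e}: the rule is a strict threshold on $x$, so the eligible set $\{\omega : x(\omega) > \theta\}$ again uses no information about $y$.

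For Part~3, I would invoke \Cref{0.745}. Its oblivious stopping rule has eligible set $Q = \{(x,t) : 1 - F(x) < z(G(t))/n\}$, where $F$ and $G$ are the marginal CDFs of $x$ and $t$ respectively. Pulling this back along $H$, the condition $H(\omega) \in Q$ is equivalent to $x(\omega) > F^{-1}\bigl(1 - z(G(t(y(\omega))))/n\bigr)$, so defining $\theta(y) \defeq F^{-1}\bigl(1 - z(G(t(y)))/n\bigr)$ gives an eligible set of the promised form $\{\omega : x(\omega) > \theta(y(\omega))\}$; atomlessness of $x$ ensures that $F^{-1}$ is well-defined on the relevant range and that strict versus weak inequality is immaterial. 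Since this rule reads off $y(\omega)$ to determine the threshold, its implementation requires observability of the agent's utility, consistent with the hypothesis of Part~3.

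The main conceptual step is really the reduction itself, which is already carried out in \Cref{prophet-reduction}; the residual work is bookkeeping. The one place that requires a little care is Part~3, where one must check that the eligible set of the oblivious stopping rule used in \Cref{0.745}, expressed in $(x,t)$ coordinates, genuinely translates into a criterion of the form $x(\omega) > \theta(y(\omega))$ after the change of variables $t = t(y)$ — this reduces to noting that the condition cuts out an upper set in $x$ at each fixed $t$, which is immediate from monotonicity of $F^{-1}$ in the defining inequality.
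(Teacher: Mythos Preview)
Your proposal is correct and follows exactly the approach the paper takes: the theorem is stated immediately after the reduction in \Cref{prophet-reduction}, and the paper's proof is simply the sentence ``Combining these observations with \Cref{1/2,1-1/e,0.745}, we obtain the following theorem.'' Your write-up is in fact more explicit than the paper's, particularly in Part~3 where you spell out the form of $\theta(y)$ via $F^{-1}$ and verify that the eligible set really is an upper set in $x$ at each fixed $y$.
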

In \Cref{appdx-tight}
we show that the bounds in all three parts of the theorem are
tight with respect to the assumptions made in their respective statements.

\section{Binary outcomes}
\label{binary}

Recall the binary model from \Cref{intro}:
The potential solutions come from a large discrete
set $\Omega = \{\omega_1,\omega_2,\ldots,\omega_m\}$
and the agent's role is to explore which of these options
are feasible to implement. If $\omega_i$ is feasible,
it yields utility $x_i$ for the principal and $y_i$
for the agent --- where the pair $(x_i,y_i)$ is commonly
known to both parties --- and if $\omega_i$ is infeasible
it yields zero utility for both parties.
To explore the feasibility of solution $\omega_i$
the agent must incur a cost of $c_i \ge 0$,
and the probability of success is $p_i > 0$, independently
of the success of other solutions.
These quantities $c_i,p_i$ are again
commonly known to both parties. We will
assume that $c_i \le p_i y_i$ for each
solution $y_i$, since otherwise it is against
the agent's self-interest to explore $\omega_i$,
even if it were assured that the solution would
be adopted if feasible.

\subsection{Optimal search policies: Weitzman's box problem}
\label{weitzman}

If the principal were conducting the search
by herself (without delegation to an agent),
this model would correspond to
a special case of the {\em box problem} introduced
by \citet{weitzman-box-problem}. The optimal
search policy is simple but
surprisingly subtle: it assigns to each
option a priority $z_i$ satisfying
$\expect[ (x_i - z_i)^+ ] = c_i$ --- which
in our case entails setting
$z_i =  x_i - c_i/p_i$ --- and then
explores options in decreasing order
of priority, selecting the first feasible
one in this ordering or stopping when
all remaining unexplored options have
$z_i < 0$.

Now suppose that the principal instead
delegates the search to an agent who
bears the cost of exploration, by running
a single-proposal mechanism with eligible
set $R$. Then the agent faces a different
instance of the box problem, in which
the set of options is limited to $R$, and
the costs and success probabilities of the
options is the same as before, but the
value of option $i$ (if feasible) is $y_i$
rather than $x_i$. This means the agent
prioritizes boxes in decreasing order of
$w_i = y_i - c_i/p_i$ rather than $z_i = x_i - c_i/p_i$,
and recommends the first box in this ordering
that is discovered to be feasible.

To summarize, the delegated search problem in
the binary model is analogous to Weitzman's box
problem, but with the important distinction
that the searcher (the principal) is not allowed
to choose the order in which to open the boxes.
Instead the problem specifies an exogenous ordering
of the boxes --- corresponding to the agent's ranking
of options by decreasing $w_i$ --- and the searcher
is only free to decide which boxes in this sequence
should be opened and which ones should be skipped,
corresponding to the principal's problem of
choosing the set $R$. Since this problem may
be of independent interest, we devote \Cref{exogenous}
below to presenting a solution that always achieves
at least half of the expected value of running the
optimal search procedure that is allowed to inspect
the boxes in any order it desires. Interestingly, the analysis
is based on prophet inequalities, specifically
\Cref{1/2} and its proof. It implies there is an
approximately optimal mechanism with the following structure.
For any half-infinite interval $X$
of the form $X=(\theta,\infty)$ or $X=[\theta,\infty)$,
let $R(X) = \{ \omega_i \mid z_i \in X \}$ and define $M(X)$ to be
the single-proposal mechanism in which a proposal
$\omega_i$ is eligible if it is feasible and belongs to $R(X)$.

\subsection{The Box Problem with an Exogenous Ordering}
\label{box}

In this section we recapitulate some background material
about \citeauthor{weitzman-box-problem}'s (\citeyear{weitzman-box-problem})
box problem. In this problem\footnote{The following description
constitutes a special case of Weitzman's
problem. The general case incorporates geometric time discounting
and time delays.} there are $m$ boxes, each containing
an independent random prize. The prize in box $i$ is denoted $v_i$,
and the cost of opening the box is $c_i$. A searcher may open any
number of boxes sequentially, or may cease the search at any time
and claim a prize from at most one of the open boxes. The problem
is to design an optimal sequential search policy.
Weitzman proves that if each box is assigned
a {\em priority} $z_i$ defined by the equation
$\expect[(v_i-z_i)^+] = c_i$, then the optimal
sequential search policy opens boxes in decreasing
order of priority, stopping at the first time when the
highest prize inside an open box exceeds the highest
priority of a closed box, or at the first time when
the priority of every remaining closed box is negative,
whichever comes sooner.

\citet{kleinberg-ec2016} provided a proof of optimality
of Weitzman's procedure in which the priority $z_i$ is
interpreted as the ``strike price'' of a real option with fair
value $c_i$. An important quantity in their analysis is the
``covered call value'', which is simply the random variable
$\kappa_i = \min \{ v_i, z_i \}$. We restate the following
lemma\footnote{Lemma 1 of the full version of their paper,
\url{http://dx.doi.org/10.2139/ssrn.2753858}.} from their work.

\begin{lem}(\citet{kleinberg-ec2016})  \label{kww}
For any sequential search procedure and any box $i$,
let $A_i,B_i$ be the indicator
random variables of the event that the procedure
selects box $i$ and the event that it opens box $i$,
respectively. The inequality
\begin{equation} \label{eq:kww}
  \expect[ A_i v_i - B_i c_i ] \leq
  \expect[ A_i \kappa_i ]
\end{equation}
is satisfied by every search procedure, and equality holds
if and only if the search procedure is \emph{non-exposed},
meaning that $A_i = B_i$ at every sample point where
$v_i > z_i$.
\end{lem}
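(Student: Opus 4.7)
The plan is to start from the algebraic identity $v_i = \kappa_i + (v_i - z_i)^+$, which follows immediately from $\kappa_i = \min\{v_i, z_i\}$. Multiplying by $A_i$ and subtracting $B_i c_i$, we obtain
\begin{equation*}
  A_i v_i - B_i c_i \;=\; A_i \kappa_i \;+\; A_i (v_i - z_i)^+ \;-\; B_i c_i,
\end{equation*}
so the desired inequality \eqref{eq:kww} reduces to proving
\begin{equation*}
  \expect\bigl[A_i (v_i - z_i)^+\bigr] \;\le\; \expect[B_i c_i].
\end{equation*}

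Next I would exploit the trivial pointwise bound $A_i \le B_i$ (the procedure can only select a box it has opened) together with non-negativity of $(v_i - z_i)^+$ to conclude
\begin{equation*}
  A_i (v_i - z_i)^+ \;\le\; B_i (v_i - z_i)^+.
\end{equation*}
The remaining step is to show $\expect[B_i (v_i - z_i)^+] = \expect[B_i]\, c_i$. This is where the key structural property of sequential search enters: the indicator $B_i$ is measurable with respect to the history observed \emph{before} box $i$ is opened, and since the prizes in distinct boxes are independent, $B_i$ is independent of $v_i$. Hence $\expect[B_i (v_i - z_i)^+] = \expect[B_i]\cdot \expect[(v_i - z_i)^+] = \expect[B_i]\cdot c_i$ by the defining equation of $z_i$. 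Chaining the three inequalities/identities gives \eqref{eq:kww}.

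For the equality clause, I would trace back which of the above steps can be an inequality. Only the step $A_i (v_i - z_i)^+ \le B_i (v_i - z_i)^+$ is not an identity, and strict inequality occurs precisely on sample points where $B_i = 1$, $A_i = 0$, and $v_i > z_i$. Thus equality in expectation is equivalent to $\Pr[B_i = 1,\ A_i = 0,\ v_i > z_i] = 0$, i.e.\ $A_i = B_i$ almost surely on $\{v_i > z_i\}$, which is the non-exposed condition.

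The main subtle point, and the only step needing real care, is justifying the independence of $B_i$ from $v_i$: one must formalize what a ``sequential search procedure'' is (a filtration-adapted rule that decides to open box $i$ using only the prizes revealed from boxes opened strictly before $i$, plus possibly independent external randomness) and invoke independence of the prizes across boxes to conclude $B_i \perp v_i$. Once the model is set up carefully, every other step is a one-line manipulation.
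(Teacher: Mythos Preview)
The paper does not prove this lemma; it is stated as a cited result from \citet{kleinberg-ec2016}, with a footnote pointing to Lemma~1 of the full version of that paper. Your argument is correct and is essentially the standard proof from that source: the decomposition $v_i = \kappa_i + (v_i - z_i)^+$, the pointwise bound $A_i \le B_i$, and the independence of $B_i$ from $v_i$ (giving $\expect[B_i(v_i-z_i)^+] = \expect[B_i]\,c_i$ via the defining equation for $z_i$) are exactly the ingredients used there, and your treatment of the equality case is also right.
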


\begin{cor} \label{max-kappa}
  For any sequential search procedure, the expected net value
  of running the procedure (i.e., the value of the selected
  box minus the combined cost of opening boxes) is bounded
  above by the expectation of the maximum covered call value,
  i.e.
  \begin{equation} \label{eq:max-kappa}
    \expect \left[ \sum_{i=1}^m A_i v_i - \sum_{i=1}^m B_i c_i \right]
    \leq
    \expect \left[ \sum_{i=1}^m A_i \kappa_i \right].
  \end{equation}
\end{cor}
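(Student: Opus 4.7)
The plan is to derive the corollary as an immediate consequence of \Cref{kww} by summing over all boxes. Specifically, \Cref{kww} gives, for each individual box $i \in \{1,\ldots,m\}$, the per-box inequality
\[
\expect[A_i v_i - B_i c_i] \;\leq\; \expect[A_i \kappa_i].
\]
Since this holds for every $i$, I can sum both sides across $i = 1,\ldots,m$ and then pull the sums outside the expectations by linearity. That yields exactly the stated inequality \eqref{eq:max-kappa}.

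There is essentially no obstacle to overcome here: the left- and right-hand sides of \eqref{eq:max-kappa} are just sums of the corresponding quantities in \eqref{eq:kww}, so the corollary follows without any additional structural argument. In particular, I do not need to invoke the constraint that at most one box is selected ($\sum_i A_i \le 1$) nor the non-exposure condition, since \Cref{kww} already subsumes those considerations at the per-box level. The only mild subtlety worth noting in the write-up is that linearity of expectation applies to these (possibly signed) sums because each summand has finite expectation whenever the search procedure has finite expected net value, which is the only case of interest.

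Thus the proof will be a two-line argument: invoke \Cref{kww} termwise, then apply linearity of expectation to conclude.
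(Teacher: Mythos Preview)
Your proposal is correct and matches the paper's own proof, which simply says the corollary is immediate by summing inequality~\eqref{eq:kww} over $i=1,\ldots,m$. Your additional remarks about linearity of expectation and not needing the constraint $\sum_i A_i \le 1$ are accurate but go slightly beyond what the paper bothers to spell out.
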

The corollary is immediate, by summing inequality~\eqref{eq:kww}
over boxes $i=1,\ldots,m$.

Now consider the box problem with an exogenous ordering
of boxes, where the searcher is limited to considering the
boxes one by one in the specified order, and once she
decides to leave a box closed or to leave the prize within
unclaimed, she cannot later return to the box and open it
or claim its prize.
We define a type of policy that we call a \emph{$\kappa$-thresholding policy};
the reason for the name will become apparent in
the subsequent \Cref{kappa-thresh}, which shows that these
policies correspond to a threshold rule applied to the sequence
of covered call values $\kappa_i$.

\begin{defn} A \emph{$\kappa$-thresholding policy} for the
box problem with exogenous ordering is a policy that
operates as follows. There is a half-infinite interval
$X = (\theta,\infty)$ or $X = [\theta,\infty)$ called the
\emph{target interval}. The
policy declines to open any box $i$ with $z_i \not\in X$.
Otherwise, if $z_i \in X$, the policy opens the box and
claims the prize inside if and only if $v_i \in X$.
\end{defn}

\begin{lem} \label{kappa-thresh}
  Every $\kappa$-thresholding policy is non-exposed.
  The expected net value of running a $\kappa$-thresholding
  policy with target interval $X$ is exactly the same
  as the expected value selected by the threshold
  stopping rule that observes the random sequence
  $\kappa_1,\kappa_2,\ldots,\kappa_n$ and selects
  the first element of this sequence that belongs
  to $X$.
\end{lem}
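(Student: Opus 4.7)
The plan is to handle the two assertions in sequence, since the value-equality claim relies on \Cref{kww}, whose equality case requires the non-exposed property established first.

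For the non-exposed property, I would argue by case analysis on box $i$. If $z_i \notin X$, then the policy never opens box $i$, so $A_i = B_i = 0$. If the policy has already claimed an earlier box by the time it would reach position $i$, then again $A_i = B_i = 0$. The only nontrivial case is when $z_i \in X$ and box $i$ is actually inspected: since $X$ is a half-infinite interval of the form $(\theta,\infty)$ or $[\theta,\infty)$ containing $z_i$, any realization with $v_i > z_i$ automatically satisfies $v_i \in X$, so the prize is claimed and $A_i = B_i = 1$. Hence $A_i = B_i$ at every sample point where $v_i > z_i$, establishing non-exposure.

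For the value-equality claim, invoking the equality case of \Cref{kww} for each box and summing yields $\expect\!\left[\sum_i A_i v_i - \sum_i B_i c_i\right] = \expect\!\left[\sum_i A_i \kappa_i\right]$, identifying the expected net value of the policy with $\expect\!\left[\sum_i A_i \kappa_i\right]$. The crux is then the elementary equivalence
\[
  \kappa_i \in X \;\Longleftrightarrow\; v_i \in X \text{ and } z_i \in X,
\]
which follows from $\kappa_i = \min(v_i,z_i)$ together with the fact that $X$ is bounded below. Consequently, the first index at which the $\kappa$-thresholding policy claims a prize --- the first $i$ with both $v_i \in X$ and $z_i \in X$ --- coincides with the first $i$ for which $\kappa_i \in X$, which is exactly the index chosen by the threshold stopping rule applied to $\kappa_1,\ldots,\kappa_n$. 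Since at most one box is ever claimed, $\sum_i A_i \kappa_i$ matches the stopping rule's selected value pointwise (taking the value $0$ on the event that no $\kappa_i$ lies in $X$), so equality of expectations is immediate. I do not anticipate a serious obstacle: once the equivalence above is observed, the rest is bookkeeping. The only subtlety worth tracking is that ``box $i$ is inspected'' is itself a random event, but on its complement both $A_i$ and $B_i$ vanish, so the identity $A_i = B_i$ on $\{v_i > z_i\}$ is preserved.
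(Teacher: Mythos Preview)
Your proposal is correct and follows essentially the same approach as the paper: first verify non-exposure by the same case split on whether $z_i \in X$ (using that $X$ is upward-closed so $v_i > z_i \in X$ forces $v_i \in X$), then invoke the equality case of \Cref{kww} to identify the net value with $\expect[\sum_i A_i \kappa_i]$, and finally use the equivalence $\kappa_i \in X \iff (v_i \in X \text{ and } z_i \in X)$ to match the policy's selection with the threshold stopping rule on $(\kappa_i)$. Your write-up is slightly more explicit than the paper's about the case where box $i$ is never reached and about the pointwise (not merely expected) agreement, but the ideas are identical.
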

\begin{proof}
  The policy is non-exposed because
  $z_i \not\in X$ implies $A_i = B_i = 0$, while $z_i \in X$
  and $v_i > z_i$ imply $A_i = B_i = 1$.
  Hence the left and right sides of~\eqref{eq:kww}
  are equal for every box, and the net value of
  running the policy is $\expect \left[ \sum_{i=1}^m A_i \kappa_i \right]$,
  i.e.\ the expected covered call value of the
  box the policy selects.
  By design, the policy perfectly simulates
  the threshold stopping rule that chooses
  the first element of the sequence $\kappa_1,\ldots,\kappa_n$
  that belongs to $X$; this is because it selects the first box
  such that $z_i$ and $v_i$ both belong to $X$, which is
  also the first box such that $\kappa_i$ belongs to $X$.
\end{proof}

\begin{thm} \label{exogenous}
  For every instance of the box problem with exogenous
  ordering, there is a $\kappa$-thresholding policy whose
  expected net value is at least half that of Weitzman's
  optimal search procedure (which endogenously
  selects the ordering of the boxes).
\end{thm}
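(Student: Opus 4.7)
The plan is to combine Corollary~\ref{max-kappa}, Lemma~\ref{kappa-thresh}, and the Samuel-Cahn prophet inequality (\Cref{1/2}), using the fact that the covered call values $\kappa_1, \ldots, \kappa_m$ form an independent sequence (since the $v_i$ are independent and the $z_i$ are deterministic). At a high level, Weitzman's value can be bounded above by $\expect[\max_i \kappa_i^+]$ via the covered-call inequality, while Lemma~\ref{kappa-thresh} lets us interpret a $\kappa$-thresholding policy's value as the payoff of a threshold stopping rule applied to the sequence $\kappa_1,\ldots,\kappa_m$ in the exogenous order; Samuel-Cahn then closes the $\tfrac12$-gap.

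First, I would establish the upper bound. By Corollary~\ref{max-kappa}, the expected net value of any search procedure is at most $\expect[\sum_i A_i \kappa_i]$. Since at most one box is selected ($\sum_i A_i \le 1$), the sum $\sum_i A_i \kappa_i$ is either $0$ (no box selected) or equals $\kappa_j$ for the selected box. In either case it is pointwise bounded by $\max_i \kappa_i^+$, where $\kappa_i^+ = \max(\kappa_i,0)$. Taking expectations, the value of Weitzman's (or any) procedure is at most $\expect[\max_i \kappa_i^+]$.

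Next, I would apply \Cref{1/2} to the independent sequence $\kappa_1^+,\ldots,\kappa_m^+$, embedding it as the element set $\{(\kappa_i^+, i/(m+1))\}_{i=1}^m \in \reals_+ \times [0,1]$ so that the exogenous ordering of boxes matches increasing time coordinate. This yields a threshold $\theta$---which can be taken as the median of $\max_i \kappa_i^+$, hence non-negative---and a (strict or weak) threshold stopping rule that selects the first index with $\kappa_i^+ \in X$, where $X = (\theta,\infty)$ or $[\theta,\infty)$, achieving expected payoff at least $\tfrac12 \expect[\max_i \kappa_i^+]$. Because $\theta \ge 0$, the criterion $\kappa_i^+ \in X$ coincides with $\kappa_i \in X$, so this is exactly the threshold stopping rule considered in Lemma~\ref{kappa-thresh}.

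Finally, Lemma~\ref{kappa-thresh} identifies the expected net value of the $\kappa$-thresholding policy with target interval $X$ as the expected payoff of this threshold stopping rule, yielding a $\kappa$-thresholding policy whose expected net value is at least $\tfrac12 \expect[\max_i \kappa_i^+]$, which in turn is at least half of Weitzman's optimum. The only delicate point in the argument is ensuring the prophet inequality, stated for non-negative random variables, applies despite the possibility of negative $\kappa_i$; this is handled cleanly by working with $\kappa_i^+$ and using the non-negativity of the Samuel-Cahn threshold, so no substantive obstacle arises beyond carefully gluing together the three ingredients.
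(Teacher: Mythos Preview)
Your proposal is correct and follows essentially the same approach as the paper: bound Weitzman's optimum by the expected maximum covered-call value via \Cref{max-kappa}, apply Samuel-Cahn's prophet inequality (\Cref{1/2}) to the independent sequence $\kappa_1,\ldots,\kappa_m$, and then identify the resulting threshold rule with a $\kappa$-thresholding policy via \Cref{kappa-thresh}. Your explicit handling of the possible negativity of $\kappa_i$ by passing to $\kappa_i^+$ and checking that the Samuel-Cahn threshold is nonnegative is a point the paper's proof glosses over; the one edge case $X=[0,\infty)$ where ``$\kappa_i^+\in X$'' and ``$\kappa_i\in X$'' differ is harmless, since in that case the $\kappa$-thresholding policy can only do better.
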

\begin{proof}
  \Cref{kappa-thresh} reduces the analysis of
  $\kappa$-thresholding policies to a question
  about prophet inequalities. In particular,
  the expected net value of running a
  $\kappa$-thresholding policy is equal to the
  expected covered call value of the random
  element selected from the sequence
  $\kappa_1,\ldots,\kappa_n$ by a
  particular threshold stopping rule.
  Since \citeauthor{samuel-cahn}'s
  (\citeyear{samuel-cahn}) prophet inequality
  (\Cref{1/2} above) implies that
  threshold stopping rules can always
  attain at least half the expectation
  of the maximum random variable in the
  sequence, it follows
  that there is a $\kappa$-thresholding policy whose
  expected net value is at least half the
  expectation of the maximum covered call value.
  \Cref{max-kappa} ensures that the latter is an
  upper bound on the expected net value of Weitzman's
  optimal search procedure.
\end{proof}

\subsection{An approximately optimal mechanism}
\label{proof-M(X)}

Recall that for a half-infinite interval
$X=(\theta,\infty)$ or $X = [\theta,\infty)$,
the mechanism $M(X)$ is defined to be the single
proposal mechanism whose eligible set consists
of solutions $\omega_i$ that are feasible and
satisfy $z_i \in X$.

\begin{thm}  \label{M(X)}
  There exists a choice of $X$ such that the expected
  net value of mechanism $M(X)$ --- i.e., the principal's
  value for adopting the agent's proposal, if adopted,
  minus combined cost of all the alternatives explored --- is
  at least half of the expected net value the principal
  could achieve by performing the optimal search herself
  (without delegation).
\end{thm}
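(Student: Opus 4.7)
The plan is to reduce the analysis of mechanism $M(X)$ to the exogenous-ordering box problem of \Cref{exogenous}. The pivotal observation is that, when the principal deploys $M(X)$, the agent's best response is behaviorally identical to the $\kappa$-thresholding policy $\pi_X$ with target $X$ applied to the exogenous ordering of the boxes by decreasing agent-priority $w_i = y_i - c_i/p_i$.

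First I would characterize the agent's best response. Given eligible set $R(X) = \{\omega_i : z_i \in X\}$, the agent faces an instance of Weitzman's box problem with prizes $y_i$ and costs $c_i$, whose optimal policy inspects boxes in decreasing $w_i$-order and halts as soon as the largest observed prize exceeds every remaining priority. For any opened feasible box $i$, the realized prize is $y_i = w_i + c_i/p_i > w_i \geq w_j$ for every unopened eligible $j$, so the agent always halts at the first feasible box encountered. His best response therefore reduces to: inspect the boxes of $R(X)$ in decreasing $w_i$-order, stopping at and proposing the first feasible one.

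Next I would match this against $\pi_X$. Both procedures inspect only boxes in $R(X)$, process them in decreasing $w_i$-order, and select the first one whose realized prize lies in $X$; in the binary model, a feasible eligible box has $v_i = x_i \geq z_i \in X$ and an infeasible one has $v_i = 0$, so for generic $\theta$ the condition ``first $v_i \in X$'' coincides with ``first feasible.'' Hence the principal's expected net value under $M(X)$ equals that of $\pi_X$, and \Cref{exogenous} furnishes a target $X$ for which the latter is at least half of Weitzman's unrestricted optimal search value --- precisely the benchmark of the principal conducting the search herself.

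The main obstacle is keeping this equivalence honest in the presence of degenerate configurations: ties among the $w_i$, vanishing costs $c_i = 0$, or some $z_i$ sitting on the boundary $\theta$ can cause the agent's Weitzman rule to deviate from ``stop at the first feasible box,'' and can misalign it with the particular strict-or-weak threshold variant for which Samuel-Cahn (\Cref{1/2}) guarantees the factor $\tfrac12$ in the proof of \Cref{exogenous}. These edge cases should be handled by a tie-breaking convention or a perturbation argument that synchronizes the agent's optimum with $\pi_X$, so that the box-for-box identification of the two stochastic procedures is valid on a full-measure set of realizations.
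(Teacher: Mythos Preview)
Your proposal is correct and follows essentially the same route as the paper: reduce to the box problem with exogenous order given by decreasing $w_i$, identify the agent's best response to $M(X)$ with the $\kappa$-thresholding policy $\pi_X$ via the observation that among eligible boxes ``first $v_i\in X$'' coincides with ``first feasible'' (since $v_i=0\notin X$ if infeasible and $v_i=x_i\ge z_i\in X$ if feasible), and then invoke \Cref{exogenous}. Your explicit verification that Weitzman's rule for the agent halts at the first feasible eligible box (because $y_i>w_i\ge w_j$ for all later $j$), and your flagging of the tie/boundary degeneracies, are details the paper leaves implicit.
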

\begin{proof}
  Convert the delegated search problem into a box problem
  with exogenous order, where the order is defined by sorting the
  solutions $\omega_1,\ldots,\omega_m$ in non-increasing
  order of the agent's priority value $w_i = y_i - c_i / p_i$,
  and the value $v_i$ inside box $i$ is defined to be
  $x_i$ if $\omega_i$ turns out to be feasible, 0 otherwise.

  According to \Cref{exogenous} there exists a choice of $X$
  such that the $\kappa$-thresholding policy with target set $X$
  attains at least half the expected net value of the optimal
  search procedure. This thresholding policy goes through
  boxes in the given order, i.e.\ descending $w_i$, and
  opens only those with $z_i \in X$, selecting the first
  one such that $v_i \in X$. Note that among the boxes
  which the policy opens, the first one with $v_i \in X$
  is also the first one corresponding to a feasible $\omega_i$.
  This is because an infeasible $\omega_i$ has $v_i = 0$ hence
  $v_i \not\in X$, whereas a feasible $\omega_i$ has $v_i = x_i \ge z_i$,
  hence $v_i \in X$.

  Recall from \Cref{weitzman} that the agent's best response
  to mechanism $M(X)$ is to go through the elements of $R(X)$ in
  decreasing order of $w_i$, stopping and proposing the first one
  that is discovered to be feasible. This is exactly the behavior
  of the $\kappa$-thresholding policy with target set $X$,
  as derived in the preceding paragraph. Hence the mechanism
  $M(X)$ coupled with the agent's best response behavior
  emulates the $\kappa$-thresholding policy which attains
  at least half the expected net value of the optimal
  search procedure.
\end{proof}

\subsection{Limiting the number of samples}
\label{limit-samples}

In some cases the number of distinct potential solutions,
$m$, may be prohibitively large, and the agent may only
have the power to explore the feasibility of a limited
number of them, $n < m$. In this case, if the principal
were to conduct the search autonomously without delegation --- subject to the
same costs $c_i$ and the same upper bound, $n$, on the
total number of solutions that can be tested for
feasibility --- it may require a very complex procedure.
Nevertheless, we will provide in this section a
simple delegated search mechanism such that it is
easy for the agent to compute a search procedure
that is a best response to the mechanism, and the
outcome of running the mechanism with this best
response attains at least $\frac12 \left( 1 - \frac1e \right)
\approx 0.316$ of the net expected value of
the (potentially complex) optimal procedure.

The key observation is the following lemma, which
provides a useful upper bound on the value of running
the optimal search procedure.

\begin{lem} \label{max-of-n}
  In the box problem with $m > n$ boxes, if the
  searcher is limited to open at most $n$ boxes
  before claiming a prize, then the expected net
  value of any search procedure is bounded above
  by $\expect[ \max_{i \in S} \kappa_i ]$ where
  $S$ is the random set of boxes that the procedure
  opens.
\end{lem}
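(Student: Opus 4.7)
The plan is to reduce the statement to a direct consequence of the Kleinberg--Waggoner--Weinberg covered-call framework already recorded in \Cref{kww} and \Cref{max-kappa}. Starting from \Cref{max-kappa} applied to our constrained procedure, the expected net value satisfies
$$
  \expect\left[\sum_{i=1}^m A_i v_i - \sum_{i=1}^m B_i c_i\right]
  \leq
  \expect\left[\sum_{i=1}^m A_i \kappa_i\right],
$$
so the entire task reduces to bounding $\expect\left[\sum_i A_i \kappa_i\right]$ from above by $\expect[\max_{i \in S}\kappa_i]$.

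Next I would exploit two structural facts. First, a search procedure selects at most one box, i.e.\ $\sum_i A_i \leq 1$; second, a selected box must have been opened, i.e.\ $A_i \leq B_i$, so the selected box (if any) belongs to $S$. Sample-wise, either no box is selected and $\sum_i A_i \kappa_i = 0$, or there is a unique selected index $i^\ast \in S$ and $\sum_i A_i \kappa_i = \kappa_{i^\ast} \leq \max_{i \in S} \kappa_i$. Using the convention $\max_\emptyset = 0$ (equivalently, appending a phantom box with $\kappa = 0$ to represent the always-available option of claiming no prize), both cases give
$$
  \sum_{i=1}^m A_i \kappa_i \leq \max_{i \in S} \kappa_i
$$
pointwise, and taking expectations finishes the proof.

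The cardinality constraint $|S| \leq n$ never needs to be invoked explicitly; it is simply inherited by $S$ on the right-hand side, which is exactly the form needed to invoke a stochastic-submodular-optimization bound later. The only genuinely subtle point is that $\kappa_i = \min\{v_i, z_i\}$ can be negative, so one has to justify the $\max_\emptyset = 0$ convention carefully — which is the main (and essentially the only) obstacle. I would handle it by observing that declining to select is always feasible and yields net value $0$, so adjoining $\kappa_0 = 0$ to the maximization on the right-hand side is without loss, after which the pointwise inequality above is immediate.
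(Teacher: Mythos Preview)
Your proposal is correct and follows essentially the same route as the paper: invoke \Cref{max-kappa} (i.e., sum \Cref{kww} over all boxes) and then bound $\sum_{i} A_i \kappa_i \le \max_{i \in S} \kappa_i$ pointwise using $A_i = 0$ for $i \notin S$ together with $\sum_i A_i \le 1$. Your extra care about negative $\kappa_i$ and the no-selection case is a refinement the paper's one-line argument glosses over; note, though, that adjoining a phantom $\kappa_0 = 0$ technically yields the upper bound $\expect[\max\{0,\max_{i\in S}\kappa_i\}]$ rather than the stated $\expect[\max_{i\in S}\kappa_i]$, a distinction that disappears in the intended application since the optimal procedure never opens a box with $z_i<0$ and hence every opened box has $\kappa_i\ge 0$.
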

\begin{proof}
  Sum up the inequality~\eqref{eq:kww} over all
  boxes and note that $A_i = 0$ for $i \not\in S$,
  to derive
  \[
    \expect \left[ \sum_{i=1}^m A_i v_i - \sum_{i=1}^m B_i c_i \right]
    \le
    \expect \left[ \sum_{i \in S} A_i \kappa_i \right].
  \]
  The lemma follows by noting that $\sum_{i \in S} A_i \kappa_i
  \le \max_{i \in S} \kappa_i$ because $\sum_{i \in S} A_i \le 1$.
\end{proof}

\begin{lem} \label{non-adaptive}
  There exists a (non-random) set $T$ of cardinality $n$,
  such that $\expect [ \max_{i \in T} \kappa_i ] \ge
  (1 - \frac1e) \expect[ \max_{i \in S} \kappa_i ]$,
  where $S$ is the random set of solutions explored
  by the optimal search procedure subject to a
  contraint of exploring at most $n$ solutions.
\end{lem}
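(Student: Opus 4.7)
The plan is to apply the adaptivity-gap theorem of \citet{asadpour-stochastic-submod} for stochastic monotone submodular maximization to the set function $F(T) \defeq \expect[\max_{i \in T} \kappa_i]$.

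First I would verify that $F$ is monotone and submodular. Monotonicity is immediate: enlarging $T$ cannot decrease the running maximum. For submodularity, fix any realization of $(\kappa_1,\ldots,\kappa_m)$; the marginal gain of adding an item $j \notin T$ to the set $T$ equals $(\kappa_j - \max_{i \in T} \kappa_i)^+$, which is non-increasing in $T$, showing that $T \mapsto \max_{i \in T} \kappa_i$ is pointwise submodular. Taking expectation preserves submodularity, so $F$ is a monotone submodular function on subsets of $\{1,\ldots,m\}$.

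Next I would reframe the right-hand side of the lemma in terms of an adaptive maximization problem for $F$. Consider the problem of choosing an adaptive policy $\pi$ that opens at most $n$ boxes so as to maximize $\expect[\max_{i \in S_\pi} \kappa_i]$, where $S_\pi$ is the random set of boxes opened by $\pi$. The optimal search procedure appearing in the lemma statement is one particular adaptive policy of this form (namely, the one that optimizes the cost-sensitive net-value objective of the box problem), so $\expect[\max_{i \in S} \kappa_i]$ is bounded above by the adaptive optimum of this pure-max problem.

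Finally I would invoke \citeauthor{asadpour-stochastic-submod}'s theorem. Because the random variables $\kappa_i = \min\{v_i,z_i\}$ are independent across $i$ (since the box values $v_i$ are independent) and $F$ is monotone submodular, their theorem guarantees the existence of a deterministic, non-adaptive set $T$ with $|T| \le n$ satisfying $F(T) \ge (1 - \frac{1}{e})$ times the value of the optimal adaptive policy. Combined with the inequality from the previous paragraph, this yields $F(T) \ge (1 - \frac{1}{e}) \expect[\max_{i \in S} \kappa_i]$, and monotonicity of $F$ lets us pad $T$ with arbitrary additional items to have cardinality exactly $n$ without decreasing its value.

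The main obstacle, conceptually, is the reframing in the second step. The random set $S$ in the lemma does not come from a procedure designed to maximize the expected max of the $\kappa_i$; rather, it comes from Weitzman's optimal policy for the cost-sensitive box problem. The reframing works because any such procedure is also a feasible adaptive policy for the pure expected-max objective under the same cardinality bound $n$, so its expected max is dominated by the adaptive optimum that \citeauthor{asadpour-stochastic-submod}'s theorem compares against. A secondary subtlety, which simple concrete examples show to be genuine, is that $\max_{|T| \le n} F(T)$ can strictly exceed $\expect[\max_{i \in S} \kappa_i]$ with gap as large as $e/(e-1)$; so the factor $(1-\frac{1}{e})$ in the lemma is not an artifact of the proof technique but reflects a real adaptivity gap in the underlying stochastic submodular optimization problem.
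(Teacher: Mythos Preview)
Your proposal is correct and takes essentially the same approach as the paper: both invoke the adaptivity-gap theorem of \citet{asadpour-stochastic-submod} for stochastic monotone submodular maximization, applied to the $\max$ function under a cardinality constraint. You supply two pieces of detail the paper leaves implicit --- the explicit verification that $\max$ is monotone submodular, and the observation that the cost-sensitive optimal search procedure is merely one feasible adaptive policy for the max-$\kappa$ objective, so its value is dominated by the adaptive optimum that the theorem bounds --- both of which are accurate and helpful.
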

\begin{proof}
  The problem of adaptively exploring a random set $S$ of at
  most $n$ solutions to maximize $\expect [ \max_{i \in S} \kappa_i ]$
  is a special
  case of the stochastic monotone submodular function maximization
  problem studied by \citet{asadpour-stochastic-submod},
  in which the role of the monotone submodular function
  $f : \reals_+^n \to \reals_+$ is played by the function
  $f(\lambda_1,\ldots,\lambda_n) = \max \{\lambda_i\}$,
  and role of the matroid constraint is played by the
  cardinality constraint that at most $n$ elements may
  be probed. Theorem 1 of \citet{asadpour-stochastic-submod},
  which asserts that the adaptivity gap of stochastic monotone
  submodular maximization is $\frac{e}{e-1}$, specializes in
  the present case to the assertion stated in the lemma.
\end{proof}

\begin{thm} \label{0.316}
  Consider delegated search in the binary model
  with a constraint that no more than $n$ solutions
  can be examined for feasibility. There exists a
  mechanism that attains at least
  $\frac12 \left( 1 - \frac1e \right)$ fraction
  of the expected net value of the optimal search
  procedure subject to the same limitation of
  examining at most $n$ solutions.
\end{thm}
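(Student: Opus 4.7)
The plan is to combine the non-adaptivity gap bound of \Cref{non-adaptive} with the unbudgeted approximation \Cref{M(X)}: \Cref{non-adaptive} lets us replace the random, adaptively-chosen exploration set with a \emph{fixed} set $T$ of size $n$ (losing a factor of $1-\tfrac1e$), and on this fixed $T$ the budget constraint is no longer binding, so the unbudgeted machinery of the previous subsection applies and costs us only another factor of $\tfrac12$.

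First, I would invoke \Cref{max-of-n} to upper-bound the expected net value of the optimal budgeted search procedure by $\expect[\max_{i\in S}\kappa_i]$, where $S$ is the random set of solutions opened by that optimal procedure. Then \Cref{non-adaptive} produces a deterministic $T \subseteq \{1,\ldots,m\}$ with $|T|=n$ such that $\expect[\max_{i\in T}\kappa_i] \ge (1-\tfrac1e)\,\expect[\max_{i\in S}\kappa_i]$. This reduces the task to designing a delegation mechanism whose value competes with $\tfrac12\expect[\max_{i\in T}\kappa_i]$.

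Next I would apply \Cref{M(X)} to the delegated search instance obtained by \emph{restricting attention to the $n$ solutions indexed by $T$} --- that is, declaring $\omega_i$ for $i\notin T$ to be ineligible from the outset. Since $|T|=n$, the sampling budget is automatically satisfied and we are effectively in the unbudgeted binary model. \Cref{M(X)} then yields a half-infinite interval $X$ such that the resulting mechanism, with eligible set $\{\omega_i : i\in T,\ z_i\in X,\ \text{feasible}\}$, achieves expected net value at least $\tfrac12 \expect[\max_{i\in T}\kappa_i]$. This stronger inequality, with the expected maximum covered call value over $T$ on the right-hand side rather than the optimal Weitzman value on $T$, is essentially what the proof of \Cref{M(X)} already establishes along the way, since it factors through \Cref{kappa-thresh} and \citeauthor{samuel-cahn}'s prophet inequality (\Cref{1/2}) applied to the sequence $\{\kappa_i\}_{i\in T}$. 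Chaining the three inequalities delivers the desired $\tfrac12(1-\tfrac1e)$ guarantee.

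The step I expect to demand the most care is verifying that the agent's best response to this mechanism in the \emph{original, budgeted} problem genuinely coincides with his best response in the unbudgeted restricted problem analyzed by \Cref{M(X)}. The key observation is that any solution outside $T$ is ineligible, so proposing one yields the $-1$ penalty while proposing $\bot$ yields $0$; consequently the agent never samples outside $T$, and since $|R(X)\cap T| \le |T| = n$ the sampling budget is never binding. Thus the agent executes exactly Weitzman's procedure on $R(X) \cap T$ in decreasing order of $w_i = y_i - c_i/p_i$, precisely as in the proof of \Cref{M(X)}, and the chain of inequalities above goes through unchanged.
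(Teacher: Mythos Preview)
Your proposal is correct and follows essentially the same route as the paper's own proof: reduce via \Cref{max-of-n} and \Cref{non-adaptive} to a fixed $n$-element set $T$, observe that the budget constraint is then vacuous, and apply \Cref{M(X)} on $T$ to pick up the remaining factor of $\tfrac12$. Your extra paragraph verifying that the agent's best response in the budgeted instance coincides with the unbudgeted best response on $T$ is a detail the paper leaves implicit, but it does not constitute a different argument.
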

\begin{proof}
  According to \Cref{max-of-n,non-adaptive}, there
  is an $n$-element set $T \subseteq \Omega$ such that
  the optimal search procedure that is limited to
  explore only solutions in $T$ is able to attain
  at least $1 - \frac1e$ fraction of the expected
  net value of the optimal search procedure that is
  limited to examine at most $n$ solutions but can
  (adaptively) choose any $n$ elements of $\Omega$
  during its search. When the set of solutions is
  restricted to $T$, the constraint that at most
  $n$ solutions can be examined becomes irrelevant
  since $T$ only has $n$ elements. Thus,
  \Cref{M(X)} guarantees the existence of a
  delegated search mechanism that is at least
  half as good as the optimal search procedure
  limited to $T$, and is consequently at least
  $\frac12 \left( 1 - \frac1e \right)$ times
  as good as the optimal search procedure
  limited to examine at most $n$ solutions.
Moreover, by applying the algorithm in
\citet{asadpour-stochastic-submod}
used to prove \Cref{non-adaptive}, we can implement this
policy in polynomial time with a loss of a further additive $\eps$
in the approximation ratio,
thus obtaining a bound of $\frac12 \left( 1 - \frac1e - \eps\right)$
efficiently.
\end{proof}


\xhdr{Acknowledgements.}
   This work was supported in part by NSF grants CCF-1512964,
   CCF-1740822, and SES-1741441, a grant from the
   MacArthur Foundation, and a Simons Investigator Award.
   The authors would like to thank Brendan Lucier,
   Jens Ludwig, Sendhil Mullainathan, Rad Niazadeh,
   and Glen Weyl for helpful discussions, and
   The Nines in Ithaca, NY, for the many deep-dish
   pizzas that were consumed during the course
   of this research.

\bibliographystyle{apalike}
\bibliography{header,refs}

\appendix

\section{Proofs of Prophet Inequalities}
\label{appdx-prophet}

In this appendix, we provide proofs of the three
main prophet inequalities used in this work,
\Cref{1/2,1-1/e,0.745}.

\subsection{Threshold stopping rules for independent distributions}
\label{samuel-cahn}

In this section we provide a proof of the following
theorem from \Cref{ctpe}. The theorem (stated in a
different form) was originally proven by
\citet{samuel-cahn}; we provide a proof here
for the sake of making the paper self-contained.

\begin{thm}
  There is a prophet inequality
  with factor $\frac12$ for
  $({\mathscr D}_{\text{ind}}, \, {\mathscr R}_{\text{thresh}})$.
\end{thm}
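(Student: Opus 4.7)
The plan is to translate the classical Samuel-Cahn argument into the continuous-time framework by noting that the $t$-coordinates play no essential role in the lower bound, only in defining the order of inspection. I will choose $\theta$ to be a median of $M := \max_i x_i$, i.e., a value satisfying $\Pr(M > \theta) \leq \tfrac12 \leq \Pr(M \geq \theta)$, and consider both the strict threshold rule (accept first pair with $x_i > \theta$) and the weak one (accept first pair with $x_i \geq \theta$); I will show that whichever of $\theta$ and $S := \sum_i \expect[(x_i - \theta)^+]$ is larger determines which of the two rules to use.

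For a fixed rule, let $A_i$ denote the event that pair $i$ is accepted and $R_i$ the event that pair $i$ is ``reached,'' i.e., no pair $j \neq i$ with $t_j < t_i$ is eligible. Writing $x_i \cdot 1_{A_i} = \theta \cdot 1_{A_i} + (x_i - \theta)^+ \cdot 1_{R_i}$ and summing over $i$ gives
\[
  \expect[\text{rule's value}] \;=\; \theta \cdot \Pr(\text{rule accepts}) \;+\; \sum_i \expect[(x_i-\theta)^+ \cdot 1_{R_i}].
\]
The crucial step is to lower-bound each term $\expect[(x_i-\theta)^+ \cdot 1_{R_i}]$. Because $1_{R_i}$ is determined by the pairs $\{(x_j,t_j) : j \neq i\}$ together with $t_i$, independence across pairs is not quite enough to factor this expectation (since $(x_i,t_i)$ may be jointly distributed). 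Here I use the key monotonicity observation: $R_i$ is implied by the event $E_i := \{\text{no } j \neq i \text{ has } x_j$ in the eligibility set$\}$, which is independent of pair $i$ altogether. Hence $\expect[(x_i-\theta)^+ \cdot 1_{R_i}] \geq \expect[(x_i-\theta)^+] \cdot \Pr(E_i) \geq \expect[(x_i-\theta)^+] \cdot \Pr(\text{rule does not accept})$, where the second inequality uses $\Pr(E_i) \geq \prod_j \Pr(x_j \not\in X) = \Pr(\text{rule does not accept})$.

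Letting $p = \Pr(M > \theta)$ and $q = \Pr(M \geq \theta)$, this yields $\expect[V^{>}] \geq \theta p + (1-p)S$ for the strict rule and $\expect[V^{\geq}] \geq \theta q + (1-q)S$ for the weak rule. The standard bound $\expect[M] \leq \theta + \expect[(M-\theta)^+] \leq \theta + S$ gives the prophet-side upper bound. To conclude, I split into two cases: if $\theta \geq S$, then $\expect[V^{\geq}] - \tfrac12(\theta + S) = (q - \tfrac12)(\theta - S) \geq 0$; if $\theta \leq S$, then $\expect[V^{>}] - \tfrac12(\theta + S) = (\tfrac12 - p)(S - \theta) \geq 0$. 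Either way, the better of the two threshold rules achieves $\tfrac12 \expect[M]$.

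The main obstacle I anticipate is the subtlety that in the continuous-time model, pair $i$'s reach event $R_i$ genuinely depends on $t_i$, and $(x_i,t_i)$ are allowed to be correlated within a pair. The clean way around this is the substitution $R_i \supseteq E_i$ described above, which completely decouples the bound from $t_i$ and recovers the usual discrete-time Samuel-Cahn calculation without having to condition on the realized order of arrivals.
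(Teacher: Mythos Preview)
Your proof is correct and follows essentially the same Samuel-Cahn argument as the paper: choose $\theta$ as a median of $\max_i x_i$, lower-bound the surplus term using the event that no other sample is eligible (your $E_i$ is exactly the paper's event $\{x_1,\ldots,x_n\} \cap X \subseteq \{x_i\}$), and combine with the upper bound $\expect[M] \le \theta + S$. The only cosmetic difference is in how to choose between the strict and weak thresholds: the paper takes a convex combination of the two rules calibrated so that the acceptance probability is exactly $\tfrac12$, whereas you do a case split on whether $\theta \ge S$ or $\theta \le S$; both finishes are standard and equivalent.
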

\begin{proof}
  Consider any distribution $\D = \D_1 \times \cdots \times \D_n$
  in ${\mathscr D}_{\text{ind}}$ and let $x_1,\ldots,x_n$ denote
  independent random variables representing the $x$-coordinates
  of random samples from $\D_1,\ldots,\D_n$, respectively.
  Note that the subscripts on the variables $x_1,\ldots,x_n$
  represent the distributions from which they were sampled,
  not necessarily the order in which they arrive, since their
  corresponding time coordinates $t_1, \ldots, t_n$ may
  not be in ascending order. However, this issue will be
  immaterial in the proof because our argument is insensitive
  to the arrival order of $x_1,\ldots,x_n$.

  Let $x_* = \max \{x_1,\ldots,x_n\}$ and
  choose a threshold $\theta$ defining two
  semi-infinite intervals
  $$X_0 = (\theta,\infty), \quad
  X_1 = [\theta,\infty)$$ such that
  $p_0 \stackrel{\Delta}{=} \Pr(x_* \in X_0) \le \frac12$ and
  $p_1 \stackrel{\Delta}{=} \Pr(x_* \in X_1) \ge \frac12$.
  Let $q$ be the solution to the equation $q p_0 + (1-q) p_1 = \frac12$,
  and let $x_\tau$ denote the random variable
  defined by the following sampling process: first choose the interval
  $X=X_0$ with probability $q$ and $X=X_1$ with probability $1-q$.
  Then apply the threshold stopping rule with eligible set
  $X \times [0,1]$ to select an element $(x,t) \in S$ and
  let $x_\tau$ denote the $x$-coordinate of that element.
  We will prove that $\expect x_\tau \ge \frac12 \expect x_*$.
  Since $\expect x_\tau$ is a convex combination of the
  expected value of applying the threshold stopping rule
  with eligible set $X_0 \times [0,1]$ and the one with
  eligible set $X_1 \times [0,1]$, it will follow that the
  better of those two stopping rules fulfills a prophet
  inequality with factor $\frac12$.

  To compare $\expect x_*$ with $\expect x_\tau$ we reason
  as follows. First, we have the following easy upper bound
  on $x_*$.
  \begin{equation} \label{eq:x*}
    \expect x_* \le \theta + \sum_{i=1}^n \expect (x_* - \theta)^+
  \end{equation}
  To put a lower bound on $x_\tau$, let $Y_i$ denote an indicator
  random variable for the event that $\{x_1,\ldots,x_n\} \cap X = \{x_i\}$.
  Note that this event, when it happens, implies that
  $x_\tau = x_i$.
  \begin{align}
\nonumber
    \expect x_\tau & \ge \Pr(x_\tau \in X) \cdot \theta +
    \sum_{i=1}^n \expect[Y_i (x_i - \theta)] \\
\nonumber    &=
    \frac12 \theta +
    \sum_{i=1}^n \Pr(Y_i=1) \expect[ x_i - \theta \given Y_i=1 ] \\
\intertext{because $\Pr(x_\tau \in X) = \frac12$ by our
  construction of the random set $X$}
\nonumber    &=
    \frac12 \theta +
    \sum_{i=1}^n \Pr(\{x_1,\ldots,x_n\} \cap X \subseteq \{x_i\} ) \cdot
      \expect[ (x_i - \theta)^+ ] \\
\nonumber     &\ge
    \frac12 \theta +
    \sum_{i=1}^n \Pr(x_* \not\in X) \cdot \expect[ (x_i - \theta)^+ ] \\
    &=
    \frac12 \left( \theta + \sum_{i=1}^n \expect[ (x_i - \theta)^+ ] \right),
\label{eq:xtau}
\end{align}
where the last line follows because $\Pr(x_* \not\in X) = \frac12$,
again by our construction of the random set $X$. Finally,
the theorem follows by combining~\eqref{eq:x*} with~\eqref{eq:xtau}.
\end{proof}

\subsection{Threshold stopping rules for i.i.d.\ atomless distributions}
\label{thresh-stop-iid}

This section provides a proof of the following lemma,
which is equivalent to \Cref{1-1/e} from
\Cref{ctpe}.

\begin{lem} \label{thresh-stoprule}
  If $X_1,X_2,\ldots,X_n$ is a sequence
  of i.i.d.\ random variables, each sampled
  from an atomless distributions, and $\tau$
  is a  threshold stopping rule
  such that $\Pr(\tau > 1) = \exp(-1/n)$,
  then
  \begin{equation} \label{eq:thresh-stoprule}
     \expect[X_\tau] \geq \left( 1 - \tfrac1e \right)
     \expect \left[ \max_{1 \le t \le n} X_t \right].
  \end{equation}
\end{lem}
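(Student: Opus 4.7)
\medskip

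\noindent\textbf{Proof plan.} The plan is to compute $\expect[X_\tau]$ in closed form using the i.i.d.\ threshold structure, bound $\expect[\max_t X_t]$ by the classical Samuel--Cahn upper bound, and then verify that the ratio works out by reducing everything to the elementary inequality $e^{-1/n} \ge 1 - 1/n$.

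First I would fix the threshold $\theta$. Since the common distribution is atomless, I can choose $\theta$ so that $\Pr(X > \theta) = 1 - e^{-1/n}$ exactly; this is equivalent to the hypothesis $\Pr(\tau > 1) = e^{-1/n}$ because $\tau > 1$ is precisely the event $X_1 \le \theta$. Let $\rho = e^{-1/n}$ and $p = 1 - \rho$. By independence,
\begin{equation*}
  \expect[X_\tau] \;=\; \sum_{i=1}^n \rho^{\,i-1} \cdot \expect[X \cdot \mathbb{1}\{X > \theta\}]
  \;=\; \frac{1 - \rho^n}{1 - \rho} \cdot \expect[X \cdot \mathbb{1}\{X > \theta\}]
  \;=\; \frac{1 - e^{-1}}{1 - e^{-1/n}} \cdot \expect[X \cdot \mathbb{1}\{X > \theta\}].
\end{equation*}
Decomposing $\expect[X \cdot \mathbb{1}\{X > \theta\}] = \theta\, p + \expect[(X - \theta)^+]$ and simplifying gives
\begin{equation*}
  \expect[X_\tau] \;=\; (1 - e^{-1})\,\theta \;+\; \frac{1 - e^{-1}}{1 - e^{-1/n}}\,\expect[(X - \theta)^+].
\end{equation*}

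Next I would use the standard Samuel--Cahn style bound
\begin{equation*}
  \expect\!\left[\max_{1 \le t \le n} X_t \right] \;\le\; \theta \;+\; \sum_{i=1}^n \expect[(X_i - \theta)^+] \;=\; \theta \;+\; n\,\expect[(X - \theta)^+],
\end{equation*}
which is immediate from $\max_t X_t \le \theta + \sum_t (X_t - \theta)^+$ and linearity.

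Finally I would compare the two expressions. Multiplying the upper bound by $1 - 1/e$ and subtracting it from the formula for $\expect[X_\tau]$, the $\theta$ terms cancel and the claim reduces to
\begin{equation*}
  \frac{1}{1 - e^{-1/n}} \;\ge\; n, \qquad \text{i.e.,} \qquad e^{-1/n} \;\ge\; 1 - \frac{1}{n},
\end{equation*}
which is the familiar inequality $e^{-x} \ge 1 - x$ for $x \ge 0$. This closes the proof. There is no real obstacle; the only subtlety is making sure the chosen threshold $\theta$ exists, which is where atomlessness is used, and that the geometric sum $\sum_{i=1}^n \rho^{\,i-1}$ collapses cleanly against the factor $1 - \rho$ coming from $\Pr(X > \theta)$, leaving the ratio $(1 - e^{-1})/(1 - e^{-1/n})$ to be compared with $n(1 - e^{-1})$.
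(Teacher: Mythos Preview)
Your proof is correct. It is closely related to the paper's argument but organized differently: the paper proves the pointwise tail inequality $\Pr(X_\tau > y) \ge (1-\tfrac1e)\Pr(X_* > y)$ by splitting into the cases $y<\theta$ and $y>\theta$ and integrating, whereas you compute $\expect[X_\tau]$ in closed form via the geometric sum and then invoke the Samuel--Cahn upper bound $\expect[X_*]\le \theta + n\,\expect[(X-\theta)^+]$ directly on the expectation. Both routes hinge on exactly the same elementary inequality $1-e^{-1/n}\le 1/n$; the paper uses it to pass from $(1-e^{-1/n})^{-1}(1-F(y))$ to $n(1-F(y))\ge 1-F(y)^n$, while you use it to compare $\frac{1}{1-e^{-1/n}}$ with $n$. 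Your approach is arguably a bit more compact, since it avoids the case split on $y$ and the separate use of $1-F(y)^n\le n(1-F(y))$, at the cost of importing the Samuel--Cahn bound (which the paper reserves for the proof of \Cref{1/2}).
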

\begin{proof}
  Let $X_* = \max_{1 \le t \le n} X_t$. We have
\begin{align}
  \label{eq:thresh-stoprule.1}
  \expect[X_\tau] &= \int_{0}^{\infty} \Pr(X_\tau > y) \, dy \\
  \label{eq:thresh-stoprule.2}
  \expect[X_*] &=
  \int_{0}^{\infty} \Pr(X_* > y) \, dy  .
\end{align}
  To compare the integrands at any specified $y \ge 0$, we
  consider the cases $y < \theta$ and $y > \theta$ separately.
  (The case $y=\theta$ is omitted because it contributes zero to
  both integrals.) When $y < \theta$ the inequality $X_\tau > y$
  is satisfied whenever $\tau \le n$. Since we are
  assuming $X_1,\ldots,X_n$ are identically distributed, and
  the threshold stopping rule has the same behavior at every
  point in time, we have
\[
  \forall t \;\; \Pr(\tau > t \given \tau > t-1) =
  \Pr(\tau > 1) = \exp(-1/n).
\]
  Hence the probability that the stopping rule does
  not stop at any time $t \le n$ is
\[
  \prod_{t=1}^{n} \Pr(\tau > t \given \tau > t-1) =
  \exp(-1/n)^n = \frac1e ,
\]
  and therefore for $y < \theta$,
\[
  \Pr(X_\tau > y) = 1 - \tfrac1e \ge
  \left( 1 - \tfrac1e \right) \cdot \Pr( X_* > y ) .
\]
  Meanwhile, for $y > \theta$, the probability
  that the stopping rule stops at time $t$ and
  selects an element of value greater than $y$ is
  $\exp(-1/n)^{t-1} (1 - F(y))$. Summing over $t$,
  we have
\[
  \Pr(X_\tau > y) =
  \left( \sum_{t=1}^n \exp \left( - \tfrac{t-1}{n} \right) \right)
  \left( 1 - F(y) \right) =
  \left( 1 - \tfrac1e \right)
  \left( 1 - \exp(-1/n) \right)^{-1}
  \left( 1 - F(y) \right).
\]
  Using the fact that $1 - \exp(-1/n) < 1/n$, we find that
\[
  \Pr(X_\tau > y) \ge
  \left( 1 - \tfrac1e \right) n \left( 1 - F(y) \right) \ge
  \left( 1 - \tfrac1e \right) (1 - F(y))^n =
  \left( 1 - \tfrac1e \right) \Pr(X_* > y).
\]
  Since we have proven that
  $\Pr(X_\tau > y) \ge \left(1 - \tfrac1e \right) \Pr(X_* > y)$
  for all $y \ne \theta$, we may integrate over $y$ and
  combine with~\eqref{eq:thresh-stoprule.1}-\eqref{eq:thresh-stoprule.2}
  to obtain~\eqref{eq:thresh-stoprule}.
\end{proof}

\subsection{Oblivious stopping rules applied to i.i.d. atomless
  product distributions}
\label{sec:0.745}

Finally, in this subsection we furnish the proof of
\Cref{0.745} from \Cref{ctpe}. The oblivious stopping
rule that fulfills a prophet inequality with factor
$\alpha - O(\frac{ \log n } { n })$ is more complicated
than the threshold stopping rules analyzed earlier.
It is defined by first solving a differential equation
\begin{equation} \label{z-diffeq}
  \frac{dz}{ds} = 1 + z + \beta_n e^z
\end{equation}
with initial condition $z(0)=0$ to define a function
$z(s)$ mapping $[0,1]$ to $[0,n]$. The constant
$\beta_n$ is chosen so that $z(1) = n$. Since
the differential equation~\eqref{z-diffeq} implies
\[
  s = \int_0^{z(s)} \frac{dz}{1 + z + \beta_n e^z}
\]
for all $s>0$ such that $z(s)$ is defined,
the boundary condition $z(1)=n$ requires the equation
\begin{equation} \label{eq:integral-z}
  1 = \int_0^n \frac{dz}{1 + z + \beta_n e^z}
\end{equation}
to be satisfied, and we treat this equation
as the definition of $\beta_n$.

Denoting the cumulative distribution functions
of $x_i$ and $t_i$ by $F$ and $G$, respectively,
we will be analyzing the oblivious stopping
rule which selects the earliest $(x_i,t_i)$
satisfying
$$ 1 - F(x_i) < z(G(t_i)) / n . $$
We will prove that this rule satisfies
a prophet inequality with a specific
factor $\alpha_n$ to be determined later.
(Equation~\eqref{alphan} below defines $\alpha_n$.)
Let $x_\tau$ denote the
$x$-coordinate of the element $(x_i,t_i)$
selected by the oblivious stopping rule,
and let $x_*$ denote the random variable
$\max_{1 \le i \le n} \{x_i\}$. As in the proof
of \Cref{thresh-stoprule}, we will make use
of the equations
\[
  \expect x_* = \int_0^{\infty} \Pr(x_* > y) \, dy, \quad
  \expect x_{\tau} = \int_0^{\infty} \Pr(x_\tau > y) \, dy
\]
which reduces the task of proving a prophet inequality
with factor $\alpha_n$ to the task of proving
\[
  \forall y \;\;\; \alpha_n \cdot \Pr(x_* > y) \le \Pr(x_\tau > y).
\]
At this point a couple of observations will slightly
simplify the analysis. If we change coordinates to
replace $t_i$ with $G(t_i)$ this has no effect on the
behavior of the stopping rule, and of course if has
no effect on $x_*$, so we are free to adopt this
reparameterization and assume henceforth that $t_i$
is uniformly distributed in $[0,1]$. In particular,
this means the stopping rule simplifies to choosing
the first $(x_i,t_i)$ such that
$ 1 - F(x_i) < z(t_i)/n. $
The next simplification comes from
introducing the variable $q = n \cdot (1 - F(y))$
and writing the
event $x_* > y$ in the form
$1 - F(x_*) < 1 - F(y) = q/n$.
The probability of this event is
$$
  1 - \prod_{i=1}^{n} \Pr \left( 1 - F(x_i) \ge \tfrac{q}{n} \right)
  =
  1 - \left(1 - \frac{q}{n} \right)^n .
$$
The function $\lambda \mapsto \left( 1 + \frac{\lambda}{n} \right)^n$
will be appearing frequently throughout this calculation
so we will assign it a name: $\exp_n(\lambda)$. (This name
is inspired by the fact that $\exp_n(\lambda) \to \exp(\lambda)$
as $n \to \infty$ for any fixed $\lambda$, so $\exp_n$ is a
degree-$n$ polynomial approximation to the exponential function.)

We have derived that for $q = n \cdot (1 - F(y))$,
$$
  \Pr( x_* > y ) = 1 - \exp_n(-q).
$$
Now what about $\Pr( x_\tau > y )$?
We can calculate this probability by
integrating, with respect to $t \in [0,1]$,
the probability that the stopping rule
selects a point in $(y, \infty) \times (t, t+dt)$.
In order for $(x_i,t_i)$ to be this point, the
following things must happen.
\begin{enumerate}
\item $t_i \in (t,t+dt)$. This has probability $dt$.
\item $1 - F(x_i) < z(t_i)/n \approx z(t)/n$.
\item $1 - F(x_i) < 1 - F(y) = q/n$.
\item None of the points $(x_j,t_j)$, for $j \neq i$,
satisfy the stopping condition with $t_j < t_i$.
\end{enumerate}
The conjunction of the first three events has
probability $\frac1n \min \{ z(t),q \} \, dt$.
Let $Z(t) = \int_0^t z(u)$ so that
$Z(t)/n$ is the probability that a specific
point $(x_j,t_j)$ satisfies $t_j < t$ along
with the stopping condition $1 - F(x_j) < z(t_j)/n$.
The fourth event above has probability
$( 1 - Z(t)/n )^{n-1} > \exp_n(-Z(t))$.
Hence
\[
  \Pr( x_\tau > y ) > \int_0^1 \min\{z(t),q\} \exp_{n}( -Z(t) ) \, dt.
\]
We are left with proving that for a suitable choice
of $\alpha_n$, the inequality
\begin{equation} \label{expn}
   \alpha_n \left( 1 - \exp_n( - q ) \right) \le
   \int_0^1 \min\{z(t),q\} \exp_n(-Z(t)) \, dt
\end{equation}
holds for all $q \in [0,n]$.
At this point a useful identity comes to our aid.
\begin{lem} \label{mysterious-useful}
  The functions $z$ and $Z$ defined above
  satisfy the equation
\begin{equation} \label{exp}
  \frac{1}{1+\beta_n} \left(1 - e^{-q} - q e^{-n}\right)  =
  \int_0^1 \min\{z(t),q\} \exp(-Z(t)) \, dt
\end{equation}
  for every $q \in [0,n]$.
\end{lem}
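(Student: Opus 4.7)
The plan is to verify the identity by first establishing an auxiliary relation between $e^{-Z(t)}$ and $z'(t)\, e^{-z(t)}$, and then using it to reduce the integral to an elementary calculation in the variable $u = z(t)$.

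First, I would establish the auxiliary identity
\[
(1+\beta_n)\, e^{-Z(t)} \;=\; z'(t)\, e^{-z(t)} \qquad \text{for every } t \in [0,1].
\]
Let $\Phi(t) = (1+\beta_n)\, e^{-Z(t)}$ and $\Psi(t) = z'(t)\, e^{-z(t)}$; both take the value $1+\beta_n$ at $t=0$. Differentiating $\Phi$ and using $Z'(t) = z(t)$ gives $\Phi'(t) = -z(t)\Phi(t)$. For $\Psi$, differentiating the ODE $z' = 1 + z + \beta_n e^z$ yields $z'' = z'(1 + \beta_n e^z) = z'(z'-z)$, so $\Psi'(t) = (z'' - (z')^2)\, e^{-z} = -z(t)\, z'(t)\, e^{-z(t)} = -z(t)\Psi(t)$. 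Thus $\Phi$ and $\Psi$ solve the same linear first-order ODE with identical initial value, and they coincide by uniqueness.

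Next, using this identity together with the substitution $u = z(t)$ --- valid because $z' = 1 + z + \beta_n e^z > 0$ makes $z$ a smooth strictly increasing bijection from $[0,1]$ onto $[0,n]$, the latter interval fixed by $z(1)=n$ via the defining equation of $\beta_n$ --- the right-hand side of the claim can be rewritten as
\[
 (1+\beta_n) \int_0^1 \min\{z(t), q\}\, e^{-Z(t)}\, dt \;=\; \int_0^1 \min\{z(t), q\}\, z'(t)\, e^{-z(t)}\, dt \;=\; \int_0^n \min\{u,q\}\, e^{-u}\, du.
\]
Splitting this last integral at $u = q$ and integrating by parts in the first piece gives
\[
 \int_0^q u\, e^{-u}\, du \;+\; q \int_q^n e^{-u}\, du \;=\; \bigl( 1 - (1+q) e^{-q} \bigr) + q \bigl( e^{-q} - e^{-n} \bigr) \;=\; 1 - e^{-q} - q\, e^{-n}.
\]
Dividing by $1+\beta_n$ recovers the left-hand side of the claim.

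The main obstacle is identifying the auxiliary identity at the start; once it is in hand, the remainder is a substitution and one elementary integral. A natural way to motivate it is to look for an integrating factor that reconciles the weight $e^{-Z(t)}$ with the natural measure $du = z'(t)\, dt$ induced by the substitution $u = z(t)$, so that the awkward double structure implicit in $R(q)$ collapses to the one-dimensional integral $\int_0^n \min\{u,q\} e^{-u}\, du$.
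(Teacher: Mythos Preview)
Your proof is correct and follows essentially the same approach as the paper: both hinge on the auxiliary identity $(1+\beta_n)\,e^{-Z(t)} = z'(t)\,e^{-z(t)}$, established via $z'' = z'(z'-z)$, and then reduce the integral to an elementary computation. Your execution of the second step is slightly cleaner---the substitution $u=z(t)$ collapses the whole integral to $\int_0^n \min\{u,q\}e^{-u}\,du$ in one stroke---whereas the paper splits at $r=z^{-1}(q)$ and works in the $t$ variable, but the underlying idea is the same.
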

\begin{proof}
  The proof is a brief but fairly opaque calculation
  using the differential equation~\eqref{z-diffeq} that
  defines the function $z$. First, observe that $z$
  satisfies the equation
$
  z''(r) = z'(r) \cdot ( z'(r) - z(r) )
$
  because if one differentiates both sides of~\eqref{z-diffeq}
  one obtains
\[
  z''(r) = z'(r) + \beta z'(r) e^{-z(r)}
    = z'(r) \cdot ( 1 + \beta e^{-z(r)} )
    = z'(r) \cdot ( z'(r) - z(r) ).
\]
  Second, observe that the function $h(r) = z'(r) \exp(Z(r)-z(r))$ is
  constant: its derivative satisfies
\begin{align*}
    h'(r) &= z''(r) \exp(Z(r) - z(r))
            + z'(r) ( Z'(r) - z'(r) ) \exp(Z(r)-z(r)) \\
          &= [ z''(r) - z'(r) \cdot (z'(r) - z(r)) ] \exp(Z(r)-z(r)) = 0.
\end{align*}
  Since $h(0) = z'(0) = 1 + \beta_n$, we may conclude that
  $h(r) = 1+\beta_n$ for all $r$, which means
\begin{equation} \label{eq:exp-Z}
  \exp(-Z(r)) = \frac{1}{1+\beta_n} z'(r) \exp(-z(r)) .
\end{equation}
  Integrating both sides of~\eqref{eq:exp-Z} we obtain
\begin{align}
  \int_r^1 \exp(-Z(t)) \, dt &=
  \frac{1}{1+\beta_n} \int_r^1 \exp(-z(t)) z'(t) \, dt
\nonumber \\
  &=
  \frac{1}{1+\beta_n} \left( e^{-z(r)} - e^{-z(1)} \right) =
  \frac{1}{1+\beta_n} \left( e^{-z(r)} - e^{-n} \right).
\label{eq:integral-exp-Z}
\end{align}
Let $r = z^{-1}(q)$. The function $z(t)$ is increasing in $t$,
so $\min\{z(t),q\}$
equals $z(t)$ for $t \le r$, and it equals $q$ for $t > r$.
Hence,
\begin{align*}
  \int_0^1 \min\{z(t),q\} \exp(-Z(t)) \, dt &=
  \int_0^r z(t) \exp( -Z(t) ) \, dt \, + \,
  q \int_r^1 \exp( -Z(t) ) \, dt \\
  &= 1 - \exp ( - Z(r) ) + z(r) \int_r^1 \exp( - Z(t) ) \, dt \\
  &=
  1 - \left( \frac{1}{1+\beta_n} \right) z'(r) e^{-z(r)} +
  \frac{z(r)}{1+\beta_n} \left( e^{-z(r)} - e^{-n} \right) \\
  &=
  \frac{1 + \beta_n - (z'(r) - z(r))e^{-z(r)} - z(r) e^{-n}}{1+\beta_n} \\
  &=
  \frac{1 + \beta_n - (1 + \beta_n e^{z(r)}) e^{-z(r)} - z(r) e^{-n}}{1 + \beta_n} \\
  &=
  \frac{1 - e^{-z(r)} - z(r) e^{-n}}{1+\beta_n}
  =
  \frac{1 - e^{-q} - q e^{-n}}{1+\beta_n}
\end{align*}
which matches the equation claimed in the lemma statement.
\end{proof}
Compared to the equation~\eqref{exp}
asserted by \Cref{mysterious-useful},
the inequality \eqref{expn} that we need
to prove differs in three important
respects.
\begin{enumerate}
\item The $\exp_n$ function appearing on both sides
  of~\eqref{expn} has been replaced by the
  exponential function in~\eqref{exp}.
\item The left side of~\eqref{exp} has an
  additional $q e^{-n}$ term that does not
  correspond to any term on the left side
  of~\eqref{expn}.
\item The constant on the left side of~\eqref{exp}
  is $\frac{1}{1+\beta_n}$ instead of
   $\alpha_n$. In fact,
  we will be choosing $\alpha_n$
  to be very close to
  $\frac{1}{1+\beta_n}$,
  but slightly smaller.
  The difference between $\alpha_n$
  and $\frac{1}{1+\beta_n}$ will
  be necessary to compensate for
  error terms arising from substituting
  $\exp$ for $\exp_n$ and from the
  additional $q e^{-n}$ term on the left side.
\end{enumerate}
The process of accounting for these differences
between~\eqref{expn} and~\eqref{exp}
begins with the following technical lemma that
provides an upper bound on the value of
$Z(r)$ for $0 \le r \le 1$.
\begin{lem} \label{Z-ub}
  If $n \ge 3$ then $Z(r) < 2$ for all $r \in [0,1]$.
\end{lem}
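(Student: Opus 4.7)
Since $z(0) = 0$ and $z'(s) = 1 + z(s) + \beta_n e^{z(s)} > 0$ throughout $[0,1]$, the function $z$ is non-negative, so $Z$ is non-decreasing on $[0,1]$ and it suffices to prove $Z(1) < 2$.

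The central step is to derive an exact formula for $Z(1)$. Changing variables via $v = z(s)$ converts $ds$ to $dv/(1+v+\beta_n e^v)$, and exploiting the algebraic identity
\[
\frac{v}{1+v+\beta_n e^v} \;=\; 1 - \frac{1+\beta_n e^v}{1+v+\beta_n e^v} \;=\; 1 - \frac{d}{dv}\log(1+v+\beta_n e^v),
\]
integration from $v=0$ to $v=n$ yields the closed form
\[
Z(1) \;=\; \int_0^n \frac{v\,dv}{1+v+\beta_n e^v} \;=\; n - \log\frac{1+n+\beta_n e^n}{1+\beta_n}.
\]
Rearranging, the target $Z(1) < 2$ becomes equivalent to
\[
\beta_n \bigl(e^n - e^{n-2}\bigr) \;>\; e^{n-2} - n - 1. \qquad (\star)
\]
For $n = 3$ the right-hand side equals $e - 4 < 0$, so $(\star)$ holds trivially since $\beta_n > 0$. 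For $n \geq 4$, the right-hand side is positive and $(\star)$ is equivalent to $\beta_n > (1 - (n+1)e^{2-n})/(e^2-1)$; since the subtracted term is positive, a sufficient condition is the cleaner bound $\beta_n \geq 1/(e^2 - 1)$.

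The \textbf{main obstacle} is proving this lower bound $\beta_n \geq 1/(e^2-1)$ for $n \geq 4$. My approach would use monotonicity: the defining integral $F_n(\beta) := \int_0^n du/(1+u+\beta e^u)$ is strictly increasing in $n$ (for fixed $\beta$) and strictly decreasing in $\beta$ (for fixed $n$), so the relation $F_n(\beta_n) = 1$ forces $\beta_n$ to be increasing in $n$. It therefore suffices to verify the base case $\beta_4 \geq 1/(e^2-1)$, i.e., that $F_4(1/(e^2-1)) \geq 1$. This boils down to a concrete lower bound on a specific one-dimensional integral, which I would obtain by splitting $[0,4]$ at the crossover point $u^\ast$ where $\beta e^{u^\ast} = 1 + u^\ast$: on $[0, u^\ast]$ the integrand admits the alternating-series lower bound $\frac{1}{1+u} - \frac{\beta e^u}{(1+u)^2} + \frac{\beta^2 e^{2u}}{(1+u)^3}$ whose terms integrate in closed form, while on $[u^\ast, 4]$ the coarser estimate $\frac{1}{1+u+\beta e^u} \geq \frac{1}{2\beta e^u}$ suffices. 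The resulting expressions are tedious but elementary, and together exceed $1$.
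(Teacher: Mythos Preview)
Your high-level strategy coincides with the paper's: reduce to bounding $Z(1)$, derive a closed form, and convert $Z(1)<2$ into a numerical lower bound on $\beta_n$. Your change-of-variables computation of
\[
Z(1)=n-\log\frac{1+n+\beta_n e^n}{1+\beta_n}
\]
is correct and in fact a little more self-contained than the paper's route, which reaches the equivalent identity $e^{Z(1)}=\frac{(1+\beta_n)e^n}{1+n+\beta_n e^n}$ by quoting the formula $\exp(-Z(r))=\frac{1}{1+\beta_n}z'(r)e^{-z(r)}$ established in the preceding lemma. From there the paper simply drops $1+n$ from the denominator to get $e^{Z(1)}<(1+\beta_n)/\beta_n$, which is exactly your sufficient condition $\beta_n>1/(e^2-1)$; it then verifies the single numerical inequality $\int_0^3\frac{dz}{1+z+e^z/5}>1$ (i.e.\ $\beta_3>1/5$) and invokes monotonicity of $\beta_n$, handling all $n\ge 3$ uniformly rather than splitting off $n=3$.

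Where your proposal has a genuine gap is the final verification that $F_4\bigl(1/(e^2-1)\bigr)\ge 1$. Two things go wrong. First, the three-term alternating expansion you write down is an \emph{upper} bound on the integrand, not a lower bound: for $0<x<1$ one has $1-x+x^2-\frac{1}{1+x}=\frac{x^3}{1+x}>0$, so truncating the geometric series after the $+x^2$ term overestimates $\frac{1}{1+x}$. Hence on $[0,u^\ast]$ your expression majorizes $\frac{1}{1+u+\beta e^u}$ and cannot be used to certify the integral is at least $1$. Second, the terms do not integrate in closed form as claimed: $\int\frac{e^u}{(1+u)^2}\,du$ and $\int\frac{e^{2u}}{(1+u)^3}\,du$ reduce, after integration by parts, to exponential integrals $\mathrm{Ei}(\cdot)$, which are not elementary. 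The repair is routine --- since the integrand is monotone decreasing, a right-endpoint Riemann sum with step $1/2$ already gives a rigorous lower bound exceeding $1$ --- but the argument as written does not go through.
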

\begin{proof}
  First, recall from equation~\eqref{eq:integral-z} that
  $
    \int_0^n \frac{dz}{1+z+\beta_n e^z} = 1 .
  $
  As a function of $\beta$, the value of
  $\int_0^n \frac{dz}{1+z+\beta e^z}$ is
  monotonically decreasing, so the fact that
  $\int_0^3 \frac{dz}{1+z+e^z/5} > 1$
  (as can be verified numerically) implies
  that $\beta_3 > 1/5$. Since $\beta_n$ is
  increasing in $n$, this also implies
  $\beta_n > 1/5$ for all $n \ge 3$.

  Now, as $Z(r)$ is increasing in $r$, to
  prove the inequality $Z(r) < 2$ it suffices
  to show that $Z(1) < 2$. Recalling
  equation~\eqref{eq:exp-Z}, we have
  \[
    e^{Z(1)} = \frac{(1+\beta_n) e^{z(1)}}{z'(1)}
      = \frac{(1+\beta_n)e^n}{1+n+\beta_n e^n}
      < \frac{1+\beta_n}{\beta_n} < 6
  \]
  where the final inequality used the fact that $\beta_n > 1/5$.
  Taking the logarithm of both sides, we find that
  $Z(1) < 2$ as desired.
\end{proof}
The next lemma bounds the multiplicative error in using
$\exp_n(-Z(r))$ to approximate $\exp(-Z(r))$.
\begin{lem} \label{expn-approx}
  For $0 \le \lambda < 2$ and $n \ge 2$,
  \begin{equation} \label{eq:expn-approx}
    \left( 1 - \tfrac2n \right)^2 \exp(-\lambda) \le \exp_n(-\lambda) \le
    \exp(-\lambda) .
  \end{equation}
\end{lem}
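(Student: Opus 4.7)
The plan is to prove the two inequalities separately, handling the upper bound first because it is immediate.

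For the upper bound $\exp_n(-\lambda) \le \exp(-\lambda)$, I would apply the elementary inequality $1+x \le e^x$ with $x = -\lambda/n$; since $0 \le \lambda < 2 \le n$ we have $1 - \lambda/n > 0$ (with only the degenerate boundary $n=2, \lambda\to 2$ requiring a limit argument), so raising both sides to the $n$-th power preserves the inequality and yields $(1 - \lambda/n)^n \le e^{-\lambda}$.

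For the lower bound I would reduce the whole inequality to its worst case at $\lambda = 2$ by monotonicity. Define $f(\lambda) = (1 - \lambda/n)^n e^{\lambda}$ for $\lambda \in [0, n]$. A direct differentiation gives
\[
  f'(\lambda) = -(1-\lambda/n)^{n-1} e^{\lambda} + (1-\lambda/n)^n e^{\lambda}
  = -\tfrac{\lambda}{n}(1-\lambda/n)^{n-1} e^{\lambda} \le 0,
\]
so $f$ is non-increasing on $[0,n]$, and in particular on $[0,2]$. Hence for every $\lambda \in [0,2)$,
\[
  (1-\lambda/n)^n e^{\lambda} = f(\lambda) \ge f(2) = (1-2/n)^n e^{2},
\]
which rearranges to $\exp_n(-\lambda) \ge (1-2/n)^n e^{2} \cdot \exp(-\lambda)$. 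It therefore suffices to show $(1-2/n)^n e^{2} \ge (1-2/n)^2$, i.e., $(1-2/n)^{n-2} e^{2} \ge 1$ for $n \ge 3$ (the case $n = 2$ is immediate, as $(1-2/n)^2 = 0$).

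The remaining inequality is equivalent to $e^{2} \ge (1 + 2/(n-2))^{n-2}$, which is just the standard inequality $(1 + x/m)^{m} \le e^{x}$ applied with $x = 2$ and $m = n-2$. This completes the argument. I do not anticipate a significant obstacle here; the only mildly delicate step is verifying that the monotonicity argument extends to $\lambda = 2$ (handled by taking the limit, or separately for $n=2$), and the rest is a clean reduction to an elementary exponential inequality.
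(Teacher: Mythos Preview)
Your proof is correct. The upper bound is handled identically to the paper. For the lower bound, however, you take a genuinely different route: the paper first replaces $(1-\tfrac2n)^{-2}$ by the larger quantity $(1-\tfrac{\lambda}{n})^{-2}$ (valid since $\lambda<2$), simplifying $(1-\tfrac2n)^{-2}\exp_n(-\lambda)$ to $(1-\tfrac{\lambda}{n})^{n-2}=(1+\tfrac{\lambda}{n-\lambda})^{2-n}$, and then bounds this below by $\exp(-\lambda)$ via $1+x\le e^x$ together with $(n-2)/(n-\lambda)\le 1$. You instead prove that the ratio $\exp_n(-\lambda)/\exp(-\lambda)$ is monotonically decreasing in $\lambda$, reduce to the single endpoint $\lambda=2$, and finish with $(1+\tfrac{2}{n-2})^{n-2}\le e^2$. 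Both arguments ultimately rest on the same elementary inequality $1+x\le e^x$; your monotonicity reduction makes it transparent that $\lambda=2$ is the worst case and isolates the $n$-dependence cleanly, while the paper's direct chain avoids differentiation and handles all $\lambda$ in one line. (Your caveat about needing a limit at $n=2,\ \lambda\to 2$ is unnecessary, since the hypothesis $\lambda<2$ is strict.)
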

\begin{proof}
  Recalling the definition of the $\exp_n$ function and using
  the inquality $1-x \le e^{-x}$, we have
  $\exp_n(-\lambda) = (1 - \lambda/n)^n \le \exp(-\lambda/n)^n
  = \exp(-\lambda)$ which furnishes the first inequality in the
  lemma statement. To prove the second inequality, note that
  \begin{align*}
    \left( 1 - \tfrac2n \right)^{-2} \exp_n(-\lambda) & >
    \left( 1 - \tfrac{\lambda}{n} \right)^{-2} \exp_n(-\lambda)
     =
    \left( 1 - \tfrac{\lambda}{n} \right)^{n-2}
     =
    \left( 1 + \tfrac{\lambda}{n-\lambda} \right)^{2-n} \\
    & \ge
    \exp \left( - \left( \tfrac{n-2}{n-\lambda} \right) \lambda \right) \\
    & \ge
    \exp \left( - \lambda \right)
  \end{align*}
  where the first and last inequalities made use of the fact
  that $\lambda < 2$, and the second inequality
  depends on the fact that $n \ge 2$, so that
  the exponent $2-n$ is non-positive.
\end{proof}
The following simple inequality will be used twice in the sequel.
\begin{lem} \label{lambda}
  For all $\lambda \ge 0$,
  \begin{equation} \label{eq:lambda}
    \lambda^2 \le e^{\lambda} - 1 .
  \end{equation}
\end{lem}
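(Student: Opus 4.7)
The plan is to reduce this to a one-variable calculus argument. Define $f(\lambda) = e^\lambda - 1 - \lambda^2$; since $f(0) = 0$, it suffices to show $f$ is nondecreasing on $[0,\infty)$, i.e., that $f'(\lambda) = e^\lambda - 2\lambda \ge 0$ for all $\lambda \ge 0$.

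To verify this, let $g(\lambda) = e^\lambda - 2\lambda$, note $g(0) = 1 > 0$, and observe that $g'(\lambda) = e^\lambda - 2$ vanishes only at $\lambda = \ln 2$, with $g'$ negative before and positive after. Hence $g$ attains its global minimum on $[0,\infty)$ at $\lambda = \ln 2$, where $g(\ln 2) = 2 - 2\ln 2 > 0$. Therefore $g > 0$ throughout $[0,\infty)$, which gives $f'(\lambda) \ge 0$ and in turn $f(\lambda) \ge f(0) = 0$, as desired.

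There is really no obstacle here; this is a one-line calculus exercise and the only mild subtlety is noting that $g$'s minimum on $[0,\infty)$ occurs at the interior critical point $\ln 2$ rather than at the boundary. An alternative route would be the Taylor expansion $e^\lambda - 1 = \lambda + \tfrac{\lambda^2}{2} + \tfrac{\lambda^3}{6} + \cdots$, splitting into the cases $\lambda \le 2$ (where $\lambda \ge \lambda^2/2$ handles the gap) and $\lambda \ge 2$ (where $\tfrac{\lambda^3}{6} \ge \tfrac{\lambda^2}{2}$ does), but the monotonicity argument above is cleaner and I would present that one.
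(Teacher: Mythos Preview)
Your proof is correct and essentially identical to the paper's: both define $f(\lambda)=e^\lambda-1-\lambda^2$, observe $f(0)=0$, and show $f'(\lambda)=e^\lambda-2\lambda$ is positive everywhere by checking that its minimum (at $\lambda=\ln 2$) equals $2-2\ln 2>0$.
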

\begin{proof}
  The function $e^{\lambda} - 2 \lambda$ attains its minimum value
  at $\lambda = \ln(2)$, where its value is strictly positive. Since
  this function is the derivative of $e^{\lambda} - \lambda^2 - 1$, we may
  conclude that the latter function is strictly increasing, and
  therefore $\lambda^2 \le e^{\lambda} - 1$ for all $\lambda \ge 0$.
\end{proof}
Combining the steps above, we can obtain a useful lower bound on the right
side of~\eqref{expn}.
\begin{lem}
  For $n \ge 3$,
  \begin{equation} \label{integral-term}
    \int_0^1 \min\{z(t),q\} \exp_n(-Z(t)) \, dt \ge
    \left(1 - \tfrac5n \right) \frac{1}{1+\beta_n} \left(1 - e^{-q} \right) .
  \end{equation}
\end{lem}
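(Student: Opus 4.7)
The plan is to start from the exact identity in Lemma~\ref{mysterious-useful}, which evaluates the integral on the right-hand side of~\eqref{integral-term} but with $\exp$ in place of $\exp_n$. Since $z(t)$ is nonnegative on $[0,1]$, $Z(t)$ is nondecreasing, so Lemma~\ref{Z-ub} gives $0 \le Z(t) \le Z(1) < 2$ for every $t \in [0,1]$ when $n \ge 3$. Feeding this into the first inequality of Lemma~\ref{expn-approx} with $\lambda = Z(t)$ yields $\exp_n(-Z(t)) \ge (1-2/n)^2 \exp(-Z(t))$ pointwise. Multiplying by $\min\{z(t),q\} \ge 0$, integrating over $t \in [0,1]$, and applying Lemma~\ref{mysterious-useful} gives
\[
  \int_0^1 \min\{z(t),q\} \exp_n(-Z(t))\, dt \;\ge\; \left(1-\tfrac{2}{n}\right)^{\!2} \cdot \frac{1}{1+\beta_n}\left(1 - e^{-q} - q e^{-n}\right).
\]

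The remaining task is purely algebraic: to show this lower bound dominates $(1 - 5/n)(1 - e^{-q})/(1+\beta_n)$. I would split this into two subclaims. First, the key pointwise inequality
\[
  q e^{-n} \;\le\; \tfrac{1}{n}\left(1 - e^{-q}\right) \quad \text{for all } q \in [0,n],
\]
or equivalently $g(q) \defeq n q e^{-n} + e^{-q} \le 1$. The function $g$ satisfies $g(0) = 1$ and $g''(q) = e^{-q} > 0$, so $g$ is convex and its maximum on $[0,n]$ is attained at an endpoint; hence it suffices to check $g(n) = (n^2 + 1) e^{-n} \le 1$. This is exactly Lemma~\ref{lambda} applied with $\lambda = n$, which gives $n^2 \le e^n - 1$, i.e.\ $n^2 + 1 \le e^n$. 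Consequently $1 - e^{-q} - q e^{-n} \ge (1 - 1/n)(1 - e^{-q})$.

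Second, I would verify that $(1-2/n)^2(1-1/n) \ge 1 - 5/n$ for $n \ge 3$. Expanding,
\[
  \left(1-\tfrac{2}{n}\right)^{\!2}\!\left(1-\tfrac{1}{n}\right) = 1 - \tfrac{5}{n} + \tfrac{8}{n^2} - \tfrac{4}{n^3} = 1 - \tfrac{5}{n} + \tfrac{4}{n^2}\!\left(2 - \tfrac{1}{n}\right) \ge 1 - \tfrac{5}{n},
\]
the last inequality holding since $2 - 1/n > 0$. Chaining the two subclaims produces the desired bound.

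I don't expect any serious obstacle. The only step that requires a moment of thought is recognizing that the bound on $q e^{-n}$ in terms of $1 - e^{-q}$ follows from Lemma~\ref{lambda} via the convexity of $g(q)$ — the naive attempt to compare $q e^{-n}$ directly to, say, $q e^{-q}$ fails near $q = n$, so one must instead bound $g$ at the endpoints and invoke convexity.
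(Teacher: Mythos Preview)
Your proposal is correct and follows essentially the same route as the paper: bound $\exp_n(-Z(t))$ below by $(1-2/n)^2\exp(-Z(t))$ via \Cref{Z-ub,expn-approx}, invoke \Cref{mysterious-useful}, absorb the $qe^{-n}$ term into a factor of $(1-1/n)$, and then check $(1-2/n)^2(1-1/n)\ge 1-5/n$. The only cosmetic difference is in how you establish $qe^{-n}\le \tfrac1n(1-e^{-q})$: the paper argues that $q/(1-e^{-q})$ is increasing and then applies \Cref{lambda} at $q=n$, whereas you note that $g(q)=nqe^{-n}+e^{-q}$ is convex and check its endpoint values, again finishing with \Cref{lambda} at $\lambda=n$---both equally clean.
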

\begin{proof}
  \Cref{Z-ub} justifies applying \Cref{expn-approx} to the factor
  $\exp_n(-Z(t))$ in the integrand, which implies
  \begin{align} \nonumber
    \int_0^1 \min\{z(t),q\} \exp_n(-Z(t)) \, dt & \ge
     \left( 1 - \tfrac2n \right)^2 \int_0^1 \min\{z(t),q\} \exp(-Z(t)) \, dt \\
     & = \left( 1 - \tfrac2n \right)^2 \frac{1}{1+\beta_n}
        \left( 1 - e^{-q} - q e^{-n} \right)
    \label{eq:integral-term.0}
  \end{align}
  where the second line is derived from \Cref{mysterious-useful}.

  Now, the function $\frac{q}{1 - e^{-q}}$ is an increasing function
  of $q$ in the range $0 < q < \infty$, as can be verified by computing
  that the derivative is $e^{-q} \cdot (e^q - 1 - q) \cdot (1 - e^{-q})^{-2}$
  and observing that all three factors are strictly positive. Hence,
  for $0 < q < n$ we have
  \begin{equation} \label{eq:integral-term.1}
    \frac{q}{1-e^{-q}} < \frac{n}{1-e^{-n}} < \frac{e^n}{n}  ,
  \end{equation}
  where the second inequality makes use of \Cref{lambda}.
  Rearranging terms in~\eqref{eq:integral-term.1} we obtain
  \begin{equation} \label{eq:integral-term.2}
    q e^{-n} < \tfrac1n (1 - e^{-q})
  \end{equation}
  and substituting this into the right side
  of~\eqref{eq:integral-term.0} we obtain
  \begin{equation} \label{eq:integral-term.3}
    \int_0^1 \min\{z(t),q\} \exp_n(-Z(t)) \, dt \ge
    \left( 1 - \tfrac2n \right)^2 \frac{1}{1+\beta_n}
    \left( 1 - e^{-q} \right) \left( 1 - \tfrac1n \right) .
  \end{equation}
  The observation that $(1 - \frac2n)^{2} (1 - \frac1n) > (1 - \frac5n)$
  concludes the proof.
\end{proof}
Next we need a lemma accounting for the discrepancy between
the function $1 - \exp_n(-q)$ appearing on the left side
of~\eqref{expn} and the function $1-e^{-q}$
in~\eqref{integral-term}.
\begin{lem} \label{almost-there}
  For $0 < q < n$,
  \begin{equation} \label{at.0}
    \left(1 - \tfrac1n \right) \cdot \left( 1 - \exp_n(-q) \right) < 1 - e^{-q} .
  \end{equation}
\end{lem}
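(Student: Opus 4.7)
The plan is to prove the integrated inequality by establishing a stronger pointwise inequality between the densities and then integrating. Note that
\[
 1 - \exp_n(-q) = \int_0^q (1 - s/n)^{n-1} \, ds, \qquad 1 - e^{-q} = \int_0^q e^{-s} \, ds,
\]
so it suffices to show that the integrand inequality
\begin{equation} \label{eq:at-plan}
 (1 - \tfrac{1}{n})(1 - s/n)^{n-1} < e^{-s}
\end{equation}
holds for every $s \in [0, n)$, and then integrate from $0$ to $q$.

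To establish \eqref{eq:at-plan}, I would take logarithms and study
\[
 h(s) = \ln(1 - \tfrac{1}{n}) + (n-1)\ln(1 - s/n) + s,
\]
whose sign we want to show is strictly negative on $[0,n)$. A direct computation gives
\[
 h'(s) = 1 - \frac{n-1}{n-s} = \frac{1 - s}{n - s},
\]
so $h$ increases on $[0,1]$ and decreases on $[1,n)$; hence the maximum of $h$ on $[0,n)$ is attained at $s = 1$, where
\[
 h(1) = n \ln(1 - \tfrac{1}{n}) + 1.
\]
Using the standard strict inequality $\ln(1 - x) < -x$ for $0 < x < 1$, applied with $x = 1/n$, we get $h(1) < n \cdot (-1/n) + 1 = 0$. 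Thus $h(s) < 0$ throughout $[0, n)$, which gives \eqref{eq:at-plan}. Integrating \eqref{eq:at-plan} from $0$ to any $q \in (0, n)$ yields $(1 - 1/n)(1 - \exp_n(-q)) < 1 - e^{-q}$, as claimed.

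The main subtlety is identifying the location of the maximum of $h$ and pinning down that the maximum value is strictly negative; this is where the classical bound $n\ln(1 - 1/n) < -1$ (equivalently $(1 - 1/n)^n < 1/e$) does the essential work, in the same spirit as how $\exp_n$ was shown to approximate $\exp$ earlier in \Cref{expn-approx}. Once this pointwise bound is in hand, integration is routine.
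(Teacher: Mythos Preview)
Your proof is correct and takes a genuinely different route from the paper's. The paper argues by a case split: for $q \geq n/2$ it observes that both $\exp_n(-q)$ and $e^{-q}$ are below $1/n$, so both sides of the inequality lie in $[1-\tfrac1n,1]$ and the bound is immediate; for $q < n/2$ it derives the two-term estimate $\exp_n(-q) > e^{-q}\bigl(1 - \tfrac{q^2}{2(n-q)}\bigr)$ via an integration of $\ln(1-s/n)$, and then invokes \Cref{lambda} (i.e.\ $q^2 e^{-q} < 1 - e^{-q}$) to absorb the error term into the factor $(1-\tfrac1n)^{-1}$.

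Your approach instead differentiates both sides with respect to $q$, reducing the claim to the pointwise density bound $(1-\tfrac1n)(1-s/n)^{n-1} < e^{-s}$ on $[0,n)$, which you verify by showing that the log-difference $h(s)$ is unimodal with maximum $h(1)=n\ln(1-\tfrac1n)+1<0$. This avoids the case split and the auxiliary lemma $\lambda^2 \le e^{\lambda}-1$; the only external fact you need is the classical $(1-\tfrac1n)^n < e^{-1}$. It is shorter and arguably cleaner. The paper's approach, on the other hand, yields along the way the quantitative estimate $\exp_n(-q) > e^{-q}\bigl(1 - \tfrac{q^2}{2(n-q)}\bigr)$, which is of some independent interest even though only the cruder consequence is used here. (One tiny caveat: your logarithmic argument tacitly assumes $n \ge 2$ so that $\ln(1-\tfrac1n)$ is finite; this is harmless in context, and for $n=1$ the left side of the lemma vanishes anyway.)
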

\begin{proof}
  If $q \ge n/2$ then $\exp_n(-q) < e^{-q} \le e^{-n/2} < \frac1n$,
  so both $1 - \exp_n(-q)$ and $1 - e^{-q}$ are in the interval
  $\left[ 1 - \frac1n, 1 \right]$, and the lemma follows easily
  in this case. Assume henceforth that $q < n/2$.

  For $0 < s < n$ the inequality $\ln(1-\frac{s}{n}) < -\frac{s}{n}$
  implies that $-1-\ln(1-\frac{s}{n}) > -1 + \frac{s}{n}$.
  Integrating from $s=0$ to $s=q$, we derive
  \begin{equation} \label{at.1}
    (n-q) \ln \left( 1 - \frac{q}{n} \right) >
    -q + \frac{q^2}{2n} = - \left( \frac{n-q}{n} \right) q - \frac{q^2}{2n} .
  \end{equation}
  Multiplying both sides by $\frac{n}{n-q}$ we obtain
  \begin{equation} \label{at.2}
    n \ln \left( 1 - \frac{q}{n} \right) >
    -q - \frac{q^2}{2(n-q)} .
  \end{equation}
  Now exponentiate both sides of~\eqref{at.2}.
  \begin{align}
    \nonumber
    \exp_n(-q) & > e^{-q} \exp \left( -\frac{q^2}{2(n-q)} \right) >
      e^{-q} \left[ 1 - \frac{q^2}{2(n-q)} \right] \\
    1 - \exp_n(-q) &< 1 - e^{-q} + \frac{q^2 e^{-q}}{2(n-q)} <
      1 - e^{-q} + \frac{q^2 e^{-q}}{n}
    \label{at.3}
  \end{align}
  where the final inequality used the fact that $q < n/2$.
  \Cref{lambda}
  implies that $q^2 e^{-q} < 1 - e^{-q}$ so
  \[
    1 - \exp_n(-q) < 1 - e^{-q} + \frac{1 - e^{-q}}{n} =
      (1 + \tfrac1n) \cdot (1 - e^{-q}) <
      (1 - \tfrac1n)^{-1} \cdot (1-e^{-q})
  \]
  and the lemma follows upon rearranging terms.
\end{proof}
Combining equation~\eqref{integral-term} with \Cref{almost-there},
we obtain the inequality
\[
  \left( 1 - \tfrac1n \right)
  \left( 1 - \tfrac5n \right)
  \left( 1 + \beta_n \right)^{-1}
  (1 - \exp_n(-q)) <
  \int_0^1 \min\{z(t),q\} \exp_n(-Z(t)) \, dt
\]
which proves~\eqref{expn} with
\begin{equation} \label{alphan}
  \alpha_n = \left(1 - \tfrac6n \right) \left(1 + \beta_n \right)^{-1}
\end{equation}
since $(1 - \frac1n)(1 - \frac5n) > 1 - \frac6n$.

Finally, let us bound the difference between $\alpha_n$
and the constant $\alpha = 0.745\dots$ defined by the
equation
$$\int_0^1 \frac{dy}{ y - y \ln y + 1 - \tfrac{1}{\alpha}}
= 1.$$
Perform the substitution $y = e^{-z}$
to deduce that
$$\int_0^\infty \frac{dz}{1 + z + \left( \tfrac{1}{\alpha} - 1 \right) e^z}.$$
Set $\beta = \tfrac{1}{\alpha} - 1$, which implies
$\alpha = (1+\beta)^{-1}$.
Noting that
$$
  \int_0^n \frac{dz}{1+z+\beta e^z} <
  \int_0^\infty \frac{dz}{1+z+\beta e^z} = 1
  = \int_0^n \frac{dz}{1+z+\beta_n e^z}
$$
and that $\int_0^n \frac{dz}{1+z+x e^z}$ is a
monotonically decreasing function of $x$,
we deduce that $\beta_n < \beta$ which implies
$(1+\beta_n)^{-1} > (1+\beta)^{-1} = \alpha$.
Hence $\alpha_n > \left(1 - \tfrac6n \right) \alpha$.

\section{Tightness of bounds}
\label{appdx-tight}

In this section we prove that the bounds in
each part of \Cref{distributional bounds} are
tight under their respective assumptions.

\begin{prop}
\label{1/2-neg-corr}
  For any $\eps>0$ there exists a joint distribution of
  the agent's
  and principal's utilities such that
  the principal's expected utility is always less
  than $\left( \frac12 + \eps \right) \expect [x_*]$
  regardless of the choice of mechanism.
\end{prop}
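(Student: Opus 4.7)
The plan is to exhibit, for each $\eps > 0$, an explicit i.i.d.\ distribution on which every mechanism yields at most $(\tfrac12 + \eps)\expect[x_*]$. I construct a joint distribution that negatively correlates $x$ and $y$ so that the rare realization most valuable to the principal is precisely the one the agent is least willing to propose, in analogy with the standard tight example for Samuel-Cahn's prophet inequality.

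Fix $n = 2$ and a small parameter $\delta \in (0,1)$. Let the joint distribution of $(x(\omega),y(\omega))$ be supported on two atoms
\[
  \omega_A = (1/\delta, 0), \qquad \omega_B = (1, 1),
\]
with respective probabilities $\delta/2$ and $1 - \delta/2$. A direct computation gives
\[
  \expect[x_*] = \tfrac{1}{\delta}\bigl(1 - (1-\delta/2)^2\bigr) + (1-\delta/2)^2 = 2 - \tfrac{5\delta}{4} + \tfrac{\delta^2}{4},
\]
which tends to $2$ as $\delta \to 0$.

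Next, I bound the principal's expected utility under every mechanism. By \Cref{spm} it suffices to consider single-proposal mechanisms with eligible set $R \subseteq \Omega$; since the samples lie in $\{\omega_A, \omega_B\}$ almost surely, only the intersection $R \cap \{\omega_A, \omega_B\}$ matters. One enumerates the four resulting cases, using that the agent's best response is to propose the eligible sample of maximum $y$-value. The three nontrivial cases yield principal utilities $(1/\delta)\bigl(1-(1-\delta/2)^2\bigr) = 1 - \delta/4$ (only $\omega_A$ eligible), $1 - \delta^2/4$ (only $\omega_B$ eligible), and $1 - \delta^2/4 + \delta/4$ (both eligible); the maximum over these is $1 + \delta/4 - \delta^2/4$.

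The main subtlety --- and the source of the bound's bite --- is the case in which both atoms are eligible: although $\omega_A$ offers the much larger principal-payoff $1/\delta$, the agent prefers $\omega_B$ (since $y(\omega_B)=1 > 0 = y(\omega_A)$) and therefore proposes $\omega_A$ only when no $\omega_B$ sample appears, an event of probability $(\delta/2)^2$. This is precisely what prevents the principal from simultaneously capturing both the typical value from $\omega_B$ and the occasional jackpot from $\omega_A$. Combining the two bounds, the ratio of principal utility to $\expect[x_*]$ is at most $(1 + \delta/4 - \delta^2/4)/(2 - 5\delta/4 + \delta^2/4)$, which approaches $\tfrac12$ as $\delta \to 0$; choosing $\delta$ sufficiently small in terms of $\eps$ completes the construction.
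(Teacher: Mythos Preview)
Your proof is correct and follows the same conceptual strategy as the paper's: construct a distribution in which the principal's high-value outcome is the agent's low-value outcome, reduce to single-proposal mechanisms via \Cref{spm}, and show that every eligible set forces the principal to choose between a near-certain modest payoff and a rare jackpot, each worth roughly $1$ in expectation while $\expect[x_*] \approx 2$.

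The execution differs in that you work with a two-atom distribution and fixed $n = 2$, collapsing the analysis to a four-case enumeration, whereas the paper uses an atomless mixture of two thin rectangles, takes $n$ large, and optimizes over the continuous parameter $p = \Pr\bigl((x,y) \in Q_1 \cap R\bigr)$. Your construction is more elementary and shows that the tightness already appears at $n=2$; the paper's atomless version has the minor advantage of avoiding the indifference $y(\omega_A) = 0 = y(\bot)$. Your upper bound remains valid because you implicitly give the principal the benefit of that tie-break, but replacing $y(\omega_A)=0$ by any small positive value would remove the ambiguity without changing the arithmetic.
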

\begin{proof}
  For a parameter $H \gg 1/\eps$,
  define the joint distribution of $(x,y)$
  to be a mixture of uniform distributions on two
  rectangles. The first is the rectangle
  $Q_1=[1-\frac1H,1+\frac1H] \times [2,3]$,
  the second is the rectangle $Q_2 = [H,H+2] \times [0,1]$.
  The pair $(x,y)$
  is drawn from the uniform distribution
  on $Q_1$ with probability $1-\frac{1}{nH}$ and from the uniform
  distribution on $Q_2$ with probability $\frac{1}{nH}$.
  When we draw $n$ i.i.d.\ samples $\{(x_i,y_i)\}_{i=1}^n$
  from this joint distribution, the random variable $\max \{ x_i \}$
  is uniformly distributed in $[1-\frac1H,1+\frac1H]$ with
  probability $(1-\frac{1}{nH})^n = 1 - \frac1H + O(H^{-2})$, and
  it is uniformly distributed in $[H,H+2]$ with probability
  $1-(1-\frac{1}{nH})^n = \frac1H - O(H^{-2})$. Therefore,
  letting $x_* = \max \{x_i \mid 1 \le i \le n\}$,
  \[
    \expect x_* =
    1 \cdot (1 - \tfrac{1}{nH})^n \;\; + \;\;
    (n+1) \cdot [1 - (1-\tfrac{1}{nH})^n] =
    1 + n \cdot \left( \tfrac1H - O(H^{-2}) \right) =
    2 - O(\tfrac1H) .
  \]
  Now we bound the principal's expected utility under any
  mechanism. By \Cref{spm} it is enough to consider single
  proposal mechanisms. Let $R$ denote the eligible set of
  the mechanism, and let $p$ denote the probability that
  a random sample from the distribution of $(x,y)$ belongs
  to $Q_1 \cap R$. Since
  the agent prefers every element of $Q_1$ to every element
  of $Q_2$, if even one of the $n$ samples belongs to $Q_1 \cap R$
  then the agent will propose an element of $Q_1 \cap R$. This
  event $\mathcal{E}_1$ happens with probability $1 - (1-p)^n$ and
  the principal's expected utility conditional on $\mathcal{E}_1$
  is bounded above by $1 + \frac1H$. The other case in which the
  principal gets non-zero utility is if all $n$ samples belong
  to the complement of $Q_1 \cap R$, but they don't all belong
  to the complement of $Q_2 \cup (Q_1 \cap R)$. The probability
  of this event $\mathcal{E}_2$ is
  $(1-p)^n - (1-p-n^{-2})^n = \frac{1}{H} (1-p)^{n-1} + O(H^{-2})$,
  and principal's expected utility conditional on $\mathcal{E}_2$
  is bounded above by $H+2$. Hence, denoting the principal's utility
  by $\hat{x}$,
  \begin{align*}
    \expect \hat{x} &\le
    \Pr(\mathcal{E}_1) \cdot \left( 1 + \tfrac1H \right)
    +
    \Pr(\mathcal{E}_2) \cdot (H+2) \\
    &=
    [1 - (1-p)^n] \left( 1 + \tfrac1H \right) +
    \left[ \tfrac{1}{H} (1-p)^{n-1} + O(H^{-2}) \right]
    (H+2) \\
    &=
    1 - (1-p)^n + (1-p)^{n-1} + O \left( \tfrac1H \right)
  \end{align*}
  The maximum of $(1-p)^n + (1-p)^{n-1}$ is
  $1 + \frac1n (1-\frac1n)^{n-1})$, achieved
  at $p=\frac1n$. Hence, $\expect \hat{x} = 1 + O(\frac1n) + O(\frac1H)$
  which is less than $(\frac12 + \eps) \expect x_*$
  provided $n$ and $H$ are both sufficiently large.
\end{proof}

The remaining results in this section pertain to the
case in which the principal's and agent's utilities
are independent random variables.

%

\begin{prop}
\label{1-1/e-lb}
  For any $\eps>0$ there exists a product distribution
  of the agent's and principal's utilities such that
  the marginal distribution of each party's utility is
  atomless, and if the principal cannot observe the agent's
  utility then
  the principal's expected utility is always less
  than $\left( 1 - \frac1e + \eps \right) \expect [x_*]$
  regardless of the choice of mechanism.
\end{prop}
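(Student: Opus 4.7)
The plan is to show that, under the limited-information assumption with independent atomless utilities, the principal's best mechanism is no better than the optimal threshold stopping rule in the classical i.i.d.\ atomless prophet inequality, and then to invoke the tightness of the $1 - \tfrac1e$ bound in that setting.

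First, I would use \Cref{spm} to restrict attention to single-proposal mechanisms specified by an eligible set $R \subseteq \Omega$. Since the principal cannot observe the agent's utility $y$, eligibility can depend only on $x$, so $R = x^{-1}(X)$ for some Borel $X \subseteq \reals_+$. Writing $p = P(x(\omega) \in X)$, the agent's best response becomes easy to analyze: the set of eligible indices $S = \{ i : \omega_i \in R \}$ is determined by the $x$-coordinates alone, while by independence of $y$ from $x$ and atomlessness of $y$, the maximizing index $i^* = \arg\max_{i \in S} y(\omega_i)$ is almost surely unique and, conditional on $S$, uniformly distributed on $S$ and independent of the $x$-coordinates. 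Symmetry across i.i.d.\ samples then gives
\[
  \expect[\hat{x}] \;=\; \bigl(1 - (1-p)^n\bigr)\, \expect[x \mid x \in X],
\]
which is exactly the expected payoff of the prophet-inequality threshold rule that accepts the first $x_i$ landing in $X$. For any fixed $p$, the rearrangement (``bathtub'') principle shows that $\expect[x \mid x \in X]$ is maximized over sets $X$ with $P(x \in X) = p$ by the top-quantile set $[F^{-1}(1-p), \infty)$, so without loss of generality the principal's eligibility set is a half-infinite interval $[\theta, \infty)$.

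It therefore suffices to construct an atomless marginal $F$ for $x$ such that
\[
  \sup_{\theta}\, \bigl(1 - F(\theta)^n\bigr)\, \expect[x \mid x > \theta]
  \;\le\; \bigl(1 - \tfrac1e + \eps\bigr)\, \expect[\max_{i \le n} x_i].
\]
This is precisely the tight direction of \Cref{1-1/e}. I would take an atomless continuous analogue of an ``equal-revenue''-type distribution, e.g.\ $F(t) = 1 - t^{-1/n}$ on $[1, H]$, smoothed at the endpoints so as to be atomless, and choose $n$ and $H$ large in terms of $1/\eps$. In this regime the reparameterization $c = n(1 - F(\theta))$ sweeps continuously over $(0, n)$, both $\expect[x \mid x > \theta]$ and $\expect[\max_i x_i]$ admit clean closed-form approximations, and the supremum of the relevant ratio tends to $1 - \tfrac1e$. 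Any atomless distribution for $y$, independent of $x$, then completes the product-distribution instance.

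The main obstacle is this last step: constructing and verifying the extremal atomless distribution. The reduction via \Cref{spm}, the uniform-tiebreaker observation, and the bathtub-principle step are essentially routine. The subtle point is that naive candidates --- two-point distributions smoothed into narrow slabs --- leave enough slack that intermediate thresholds do strictly better than $1 - \tfrac1e$; matching the bound asymptotically requires a heavy-tailed marginal calibrated so that the trade-off between the inclusion probability $1 - (1-p)^n$ and the conditional expectation $\expect[x \mid x > \theta]$ is tight all along the relevant range of $c = n(1-p)$, which is precisely what makes the Poisson-type upper bound underlying the proof of \Cref{1-1/e} sharp.
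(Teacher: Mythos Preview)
Your step~2 is the real gap. In the paper's model, ``the principal cannot observe the agent's utility'' means she does not know the function $y:\Omega\to\reals$; it does \emph{not} mean she only sees $x(\omega)$ when a solution $\omega$ is proposed. The agent proposes an element of $\Omega$, and the eligible set $R$ may be an arbitrary subset of $\Omega$, not only one of the form $x^{-1}(X)$. So after you invoke \Cref{spm} you still have to bound every $R\subseteq\Omega$, and your independence/uniform-tiebreaker argument in step~3 breaks down once membership in $R$ can depend on the same coordinate that determines $y$. The paper handles this by fixing $\Omega=[0,1]^2$ and the principal's utility $x=f(w)$ in advance, letting the principal choose any $R\subseteq[0,1]^2$, and \emph{then} constructing an adversarial $y(w,z)$ (uniform conditional on $w$ and on $\chi_R(w,z)$, hence atomless and independent of $x$) so that the agent's proposal is, in distribution, a uniformly random eligible sample. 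That adversarial-$y$ construction is precisely the piece your restriction to $x$-measurable $R$ is standing in for without justification.

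A secondary point: your diagnosis of the ``main obstacle'' is off. The paper uses exactly a two-point distribution smoothed into narrow slabs, with the high-value probability tuned to $p_0=\tfrac{1}{(e-2)nH}$, and shows that the resulting optimum over eligible sets of measure $q=\phi/n$ is $(1-e^{-\phi})\bigl(\tfrac{e-2}{e-1}+\tfrac{1}{(e-1)\phi}\bigr)$, maximized at $\phi=1$ with value $1-\tfrac1e$. Thresholds $\theta$ inside the low slab already realize the full continuum $q\in[p_0,1]$, so even in your restricted threshold-on-$x$ setting the same calculation goes through and the smoothed two-point instance already witnesses tightness; no heavy-tailed equal-revenue construction is needed.
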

\begin{proof}
  Let $\Omega = [0,1] \times [0,1]$ under the uniform
  measure. For a sample point $(w,z) \in \Omega$ we
  will be defining the principal's utility to be $x=f(w)$,
  where the function $f$ is defined as follows.
  For some $H \gg 1/\eps$ let $f$ map the
  interval $[0,1 - \frac{1}{(e-2)nH}]$ to
  $[1, 1 + \frac{1}{H}]$ and let $f$ map
  interval $[1-\frac{1}{(e-2)nH},1]$ to
  $[H+1,H+1+\frac{1}{H}]$. The restriction
  of $f$ to each of the two subintervals is
  linear. Thus, the distribution of $x$ is a
  mixture of two uniform distributions, on
  the intervals $[1,1+\frac{1}{H}]$ and
  $[H+1,H+1+\frac{1}{H}]$, with the mixture
  weights being $1 - \frac{1}{(e-2)nH}$ and
  $\frac{1}{(e-2)nH}$, respectively.

  By \Cref{spm} we can restrict our attention
  to single proposal mechanisms.
  Let $R$ denote the eligible set of the mechanism,
  and let $\chi_R$ denote its characteristic function.

  The agent's utility $y$ will be defined as
  a function of both $w$ and $z$, but with the
  property that conditional on the values of
  $w$ and $\chi_R(w,z)$, the value of $y$ is always
  uniformly distributed. Such a function $y$ can
  be constructed as follows. For $w \in [0,1]$
  let $R_w$ denote the intersection of $R$ with
  $\{w\} \times [0,1]$; this set is measurable
  for almost every $w$. If the measure of $R_w$,
  $m(R_w)$,
  belongs to $\{0,1\},$ then $y(w,z)=z$. If
  $0 < m(R_w) < 1$ then
  \[
    y(w,z) = \begin{cases}
      \frac{m(R_w \cap [0,z])}{m(R_w)} &
      \mbox{if } z \in R_w \\[1ex]
      \frac{m(\overline{R_w} \cap [0,z])}{m(\overline{R_w})} &
      \mbox{if } z \in \overline{R_w} .
      \end{cases}
  \]

  Let $p = \frac{1}{(e-2)nH}$ denote the probability
  that a random $(w,z)$ satisfies $x(w) \ge H+1$.
  Let $q = m(R)$ be the probability
  that a random sample $(w,z) \in [0,1]^2$ belongs
  to the eligible set. Conditional on belonging to the
  eligible set, a random sample point $(w,z)$ has
  probability at most $p/q$ of satisfying $x(w) \ge H+1$.
  Furthermore, since $y(w,z)$ is independent of both
  $w$ and $\chi_R(w,z)$, conditional on the set of eligible
  points sampled by the agent being non-empty, the
  conditional distribution of $w$ and $\chi_R(w,z)$
  for the agent's proposal (assuming best response
  to the mechanism) is equal to the conditional
  distribution of a single random sample $(w,z) \in [0,1]^2$
  conditioned on belonging to $R$. Hence, if $\hat{x}$ denotes
  the value of $x(w,z) \cdot \chi_R(w,z)$ when the agent's proposal $(w,z)$ is
  a best response to the mechanism, we have
  \[
     \expect \hat{x} = (1-(1-q)^n) \cdot \left[
         1 + \tfrac{p}{q} \cdot H + O(\tfrac1H)
       \right] .
  \]
  Meanwhile, if $x_* = \max_i \{x_i\}$ then
  \[
      \expect x_* = 1 + (1 - (1-p)^n) H + O(\tfrac1H) .
  \]
  Let $q = \phi/n$.
  Ignoring $O(\tfrac1H)$ terms that vanish as $H \to \infty$
  and $O(1/n)$ terms that vanish as $n \to \infty$ we have
  \begin{align}
  \nonumber
    \expect x_* & \approx 1 + \tfrac{1}{(e-2)H} \cdot H = \tfrac{e-1}{e-2} \\
  \nonumber
    \expect \hat{x} & \approx \left(1 - e^{-\phi} \right) \cdot \left(1 + \tfrac{1}{(e-2)\phi}
    \right) \\
  \label{eq:1-1/e.0}
    \frac{\expect \hat{x}}{\expect x_*} & \approx \left(1 - e^{-\phi} \right) \cdot \left( \frac{e-2}{e-1} + \frac{1}{(e-1)\phi} \right).
  \end{align}
  The right side of~\eqref{eq:1-1/e.0} is maximized at $\phi=1$, where it equals $1 - \frac1e$.
  Thus, for any mechanism that is obvlivious to the value of $y$,
  $\expect[\hat{x}] / \expect[x_*] \le 1 - \frac1e + O(1/H) + O(1/n)$, which is less
  than $1 - \frac1e + \eps$ for $H$ and $n$ sufficiently large.
\end{proof}

\begin{prop}
\label{0.745-lb}
  Let $\alpha=0.745\dots$ denote the solution to
  $\int_0^1 \frac{dy}{y + y \ln y -  1/\alpha + 1} = 1$.
  For any $\eps>0$ there exists a product distribution
  of the agent's and principal's utilities such that
  the principal's expected utility is always less than
  $(\alpha+\eps) \, \expect [ x_* ]$, regardless of the
  choice of mechanism.
\end{prop}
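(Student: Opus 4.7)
The plan is to reduce \Cref{0.745-lb} to the tightness of a continuous-time prophet inequality over oblivious stopping rules. By \Cref{spm} it suffices to restrict attention to single-proposal mechanisms, and by the reduction of \Cref{prophet-reduction} (the map $H(\omega) = (x(\omega), t(y(\omega)))$), such a mechanism acting on $n$ i.i.d.\ samples from an atomless independent $(x,y)$ distribution corresponds exactly to an oblivious stopping rule acting on $n$ i.i.d.\ samples from a distribution $\D \in \mathscr{D}_{\text{iid},n}$. So the task reduces to exhibiting, for each $\eps > 0$, an instance in $\mathscr{D}_{\text{iid},n}$ (for $n$ sufficiently large) on which no oblivious stopping rule achieves more than $(\alpha+\eps)\expect[x_*]$.

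For the construction I would import the classical tight example behind the Hill--Kertz / Kertz i.i.d.\ prophet inequality and lift it to the continuous-time setting. Concretely, take $t$ uniform on $[0,1]$, independent of $x$, where the CDF $F$ of $x$ is chosen so that the survival function satisfies $1-F(x) = c/x$ on a suitable interval (a truncated Pareto), mollified so that $F$ remains atomless. By design, the ODE-derived oblivious rule from the proof of \Cref{0.745}, which selects the earliest sample with $1-F(x_i) < z(G(t_i))/n$ where $z$ solves the differential equation~\eqref{z-diffeq}, saturates the inequality~\eqref{expn}; at the same time, $\expect[x_*]$ grows in such a way that the ratio of the ODE rule's expected value to $\expect[x_*]$ converges from above to $\alpha$ as $n \to \infty$.

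The main obstacle is showing that the upper bound $(\alpha+\eps)\expect[x_*]$ applies to \emph{every} oblivious stopping rule, not only to the specific ODE-derived rule used to prove \Cref{0.745}. For this I would argue that, on the particular instance above, the ODE rule is asymptotically optimal among oblivious stopping rules. An exchange-style argument deforms an arbitrary eligible set $Q \subseteq \reals_+ \times [0,1]$ toward the canonical set $Q^* = \{(x,t) \,:\, 1-F(x) < z(t)/n\}$ without decreasing expected payoff, using the heavy-tailed form of $F$ to guarantee that swapping mass between $Q$ and $Q^*$ never decreases the integrand in the payoff formula obtained in the proof of \Cref{0.745}. With this reduction in hand, the rest of the proof mirrors the bookkeeping of \Cref{0.745} run as an \emph{upper} bound on the same quantity it previously lower-bounded, combined with an asymptotic evaluation of $\expect[x_*]$ for the truncated Pareto, producing the desired $(\alpha+\eps)$ ratio. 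Making the exchange argument fully rigorous while retaining the atomless-product form of the distribution is the delicate technical step; everything else reduces to asymptotic calculus already implicit in \Cref{appdx-prophet}.
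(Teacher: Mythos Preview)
Your reduction of mechanisms to oblivious stopping rules via \Cref{spm} and \Cref{prophet-reduction} is correct, but from that point on you take a detour that is both unnecessary and incomplete. The gap is the ``exchange-style argument'' meant to show that the ODE rule is asymptotically optimal among all oblivious rules on your truncated-Pareto instance: you do not actually supply it, and it is far from clear that deforming an arbitrary eligible set $Q$ toward $Q^*$ can be made monotone in expected payoff merely by appealing to the heavy-tailed form of $F$. You are, in effect, proposing to re-prove the Hill--Kertz lower bound from scratch, restricted to oblivious rules, and leaving the hard part as a sketch.

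The paper avoids this entirely by observing that no new lower bound is needed: the classical Hill--Kertz result for \emph{general} stopping rules already suffices, once one sees that any single-proposal mechanism induces such a stopping rule. Take $y$ uniform on $[0,1]$ independent of $x \sim F$, where $F$ is any distribution witnessing the Hill--Kertz bound (so that every stopping rule $\tau$ on i.i.d.\ $x_1,\ldots,x_n \sim F$ satisfies $\expect[x_\tau] < (\alpha+\eps)\,\expect[x_*]$). Given a single-proposal mechanism with eligible set $R$, build a randomized stopping rule on $x_1,\ldots,x_n$ as follows: draw all the $y_i$'s first as auxiliary randomness, sort them as $y_1 \ge \cdots \ge y_n$, and set $\tau = \min\{i : (x_i,y_i) \in R\}$. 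Because the $x_i$ are i.i.d.\ and independent of the $y_i$, the sequence $(x_1,y_1),\ldots,(x_n,y_n)$ so produced has the same law as $n$ i.i.d.\ samples from $\Omega$ sorted by decreasing $y$; hence $x_\tau$ has exactly the distribution of the principal's utility under the mechanism. The Hill--Kertz bound then gives the contradiction directly. No explicit construction of a hard instance, no exchange argument, and no $n\to\infty$ asymptotics are required.
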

\begin{proof}
  The proof is by contradiction. It is known
  \citep{hill-kertz,kertz86} that there is no
  prophet inequality with factor $\alpha+\eps$
  for i.i.d.\ distributions. In other words,
  there exists a distribution $F$ such that
  if $x_1,\ldots,x_n$ and i.i.d.\ with distribution
  $F$, then for every stopping rule $\tau$ we have
  \begin{equation} \label{stop-rule}
    \expect x_\tau < (\alpha + \eps) \, \expect x_* .
  \end{equation}
  To prove the proposition we will consider the
  sample space $\Omega = \reals_+ \times [0,1]$,
  with sample points denoted by $(x,y)$,
  under the product distribution where $x$ is distributed
  according to $F$ and $y$ is uniformly dstributed
  in $[0,1]$. We will show how to
  transform any mechanism that guarantees utility
  at least $(\alpha+\eps) \, \expect[ x_* ]$ for the
  principal into a stopping rule $\tau$ that
  violates~\eqref{stop-rule}, yielding the desired
  contradiction.

  The transformation is simple to describe. If there
  is a mechanism guaranteeing utility
  at least $(\alpha+\eps) \, \expect[ x_* ]$ for the
  principal, we can assume it is a single proposal
  mechanism by \Cref{spm}. Let $R$ denote the
  eligible set of the mechanism. The stopping rule
  $\tau$ is specified as follows. First, it draws
  $n$ i.i.d.\ uniform random samples from $[0,1]$
  and sorts them into decreasing order, numbering
  them as $y_1 \ge y_2 \ge \cdots \ge y_n$.
  Then the stopping time $\tau$ is defined as the
  minimum $i$ such that $(x_i,y_i) \in R$.

    Observe that
    the distribution of the sequence of pairs
    $(x_1,y_1),\ldots,(x_n,y_n)$ obtained when
    running the stopping rule is identical to
    the distribution over sequences that one
    obtains by drawing $n$ i.i.d.\ samples from
    $\Omega$ and arranging them in order of
    decreasing $y$. This is because the
    random variables $x_1,x_2,\ldots,x_n$
    are i.i.d., so the distribution of the
    random sequence $x_1,\ldots,x_n$ is
    invariant under permutations. Thus, the
    process of first pairing
    $x_i$ with $y_i$ and then sorting by $y_i$
    generates the same distribution over
    sequences of pairs as if one first
    sorts $y_1,\ldots,y_n$ and then pairs
    $x_i$ with $y_i$.

    Next, observe that when the sequence
    of pairs $(x_1,y_1),\ldots,(x_n,y_n)$
    is drawn as in the preceding paragraph,
    both the stopping rule and the agent's
    best response to the single proposal mechanism
    select the pair
    $(x_i,y_i) \in R$ with maximum $y_i$, or
    they select $\bot$ if there is no pair
    $(x_i,y_i) \in R$. Hence, the principal's
    expected payoff when running the mechanism is equal to
    $\expect x_\tau$. Our assumption that
    the mechanism guarantees at least
    $(\alpha+\eps) \, \expect x_*$ to the
    principal thus implies that
    $\expect x_\tau \ge (\alpha+\eps) \, \expect x_*$,
    contradicting~\eqref{stop-rule}.
\end{proof}

\end{document}